\numberwithin{equation}{section}
\renewcommand\theequation{\arabic{section}.\arabic{equation}}
\theoremstyle{plain}
\newtheorem{thm}{Theorem}
\newtheorem{lem}{Lemma}
\newtheorem{prop}{Proposition}
\newtheorem{cor}{Corollary}
\theoremstyle{definition}
\newtheorem{conj}{Conjecture}
\theoremstyle{remark}
\newtheorem{rem}{Remark}
\newcommand{\beqs}{\begin{eqnarray}}
\newcommand{\eeqs}{\end{eqnarray}}
\DeclareMathOperator{\Ric}{Ric}
\DeclareMathOperator{\Tr}{Tr}
\DeclareMathOperator\erfc{erfc}
\begin{document}

\title{On the Number of Bound States of Point Interactions on Hyperbolic Manifolds}

\author{{Fatih ERMAN}
\\ \and
{Department of Mathematics, \.{I}zmir Institute of Technology, Urla,
35430, Izmir, Turkey}
\\
{e-mail: fatih.erman@gmail.com}}

\maketitle

\begin{abstract}
We study the bound state problem for $N$ attractive point Dirac $\delta$-interactions in two and three dimensional Riemannian manifolds. We give a sufficient condition for the Hamiltonian to have $N$ bound states and give an explicit criterion for it in hyperbolic manifolds $\mathbb{H}^2$ and $\mathbb{H}^3$. Furthermore, we study the same spectral problem for a relativistic extension of the model on $\mathbb{R}^2$ and $\mathbb{H}^2$.
\end{abstract}

\begin{flushleft}
\textbf{Mathematics Subject Classification (2000).} 47A10, 34L40.

\textbf{Keywords.} number of bound states, point interactions, resolvent, Riemannian manifolds, heat kernel.
 \end{flushleft}

\section{Introduction} \label{Introduction}

Point interactions were first introduced as a solvable toy model describing the short range interactions in nuclear and solid state physics.  There exists a vast amount of literature about them from several perspectives. The reader is invited to consult the books \cite{Albeverio 1, Albeverio 2, Demkov} and references therein for a more detailed study. The formal Hamiltonian is given by
\beqs
H=- {\hbar^2 \over 2m} \Delta  - \sum_{i=1}^{N} \lambda_i \delta(\mathbf{x} - \mathbf{a}_i) \;, \label{Hamiltonianflat}
\eeqs
where $\Delta$ is the Laplacian{\color{red},} and $\lambda_i$'s are the coupling constants (also called strength or intensity of the potential), which are assumed to be positive for all $i=1,2,\ldots, N$ and $\mathbf{a}_i$'s are the locations of the Dirac-$\delta$ centers in $\mathbb{R}^D$. One reason why the subject attracts a great deal of interest is that the point interactions (or Dirac-$\delta$ potentials) in two and three dimensions requires renormalization procedure. Moreover, the formal Hamiltonian describing them was not a well-defined self-adjoint operator in Sobolev spaces $W^{2,2}(M)$ so that one must clarify the meaning of the formal Hamiltonian. 
In order to accomplish this, one should develop a mathematically rigorous way which corresponds to the intuitive notion of Dirac-$\delta$ potential. One possible approach is to construct rigorously an operator associated with the formal Hamiltonian (\ref{Hamiltonianflat}) through the self-adjoint extension theory of symmetric operators \cite{Albeverio 1}. 
Historically, 
the first rigorous approach to the problem in $\mathbb{R}^3$ was given by Berezin and Faddeev \cite{Berezin Faddeev} and summarized in section 1.5 of \cite{Albeverio 2}. In there, the Hamiltonian is first approximated by the sequence of operators using  the spectral  representation of the Laplace operator. By choosing the sequence of functions converging to the Dirac delta function through the Fourier transformation, it is then possible to calculate the resolvent of the sequence of the operators explicitly and to show that it has a nontrivial limit if and only if a sequence for coupling constants is chosen properly (coupling constant renormalization).  Alternatively, another approach has been discussed in chapter 2.1 of \cite{Albeverio 1}: The formal Hamiltonian is first treated as a singular perturbation of the free Hamiltonian. Then, the Fourier transform of that ill-defined formal Hamiltonian with a momentum cut-off becomes a finite rank perturbation of the free Hamiltonian, and the coupling constant is chosen as a function of the cut-off in such a way that the resolvent of the regularized Hamiltonian in Fourier space has a non-trivial limit as the cut-off is removed. In both approaches, after choosing the coupling constant, the sequence of the self-adjoint operators converges to a self-adjoint operator in the strong resolvent or norm resolvent sense. Apart from the self-adjoint extension approaches developed by von Neumann and Krein, and the above two approximation procedures, there are other approaches to point interactions, namely non-standard analysis and the theory of quadratic forms \cite{Albeverio 1}.

Point interactions have been generalized onto some particular surfaces in $\mathbb{R}^3$ (onto infinite planar strip as a natural model for quantum wires containing impurities and onto torus using the Von Neumann's and Krein's theory of self-adjoint extensions (see \cite{Rudnick torus, Exner strip} and references therein))  and their spectral properties have been studied in great detail. Moreover, they have been constructed rigorously on even more general spaces, e.g. Riemannian manifolds of bounded geometry in \cite{BGP}. Our approach in this work is to study some interesting problems for the bound state spectrum of the point interactions on some class of Riemannian manifolds. 
In order to keep the present paper self-contained and make the reading of this paper easier, we recall some of the results basically established in our previous works constructed on compact manifolds
and Cartan-Hadamard manifolds with Ricci tensor bounded below (by which we mean $\Ric(.,.) \geq  c \; g(.,.) $)  in two and three dimensions \cite{point interactions on manifolds1, point interactions on manifolds2, caglar} (they were not stated as theorems in there) through Theorem \ref{theorem1} and Theorem \ref{compact spectrum}. Their proofs are given in Appendices. 
In there, we basically follow a strategy similar to the above first approximation procedure by using the heat kernel. In contrast to the flat case, Fourier transformation was useless since it cannot be defined globally on a generic Riemannian manifold. 
After reviewing our previous results, we show in this work that the principal matrix given in the resolvent formula is a matrix-valued holomorphic function on the complex plane, where $\Re(z)<0$ for the above-mentioned manifolds. This is done by using the explicit closed expression of the principal matrix without going into the theory of Nevanlinna functions. Hence,
we also justify some a priori assumptions in our previous works and improve our earlier somewhat heuristic calculations.

The estimates for the number of bound states of a Schr\"{o}dinger operator for a particular class of potentials is extensively discussed in \cite{Reed Simon V4}. For $N$ point $\delta$-interactions, it is well-known that there exist at most $N$ bound states in flat spaces  \cite{Albeverio 1, Albeverio 2}. Moreover, necessary and sufficient condition for the one dimensional Schr\"{o}dinger operator with finitely many point $\delta$-interactions to have the same number of negative eigenvalues as the number of point interactions  is given in \cite{Albeverio Nizhnik 1} and  
an effective algorithm for determining the number of negative eigenvalues  is constructed in \cite{Albeverio Nizhnik 2}. It has been proved in \cite{Ogurisu 1, Ogurisu 2}  that the number of negative eigenvalues is less than or equal to the number of negative coupling constants and  necessary and sufficient conditions are given for it to satisfy the saturation value of the bound.     
In \cite{Goloshchapova Oridoroga 1, Goloshchapova Oridoroga 2}, the number of negative eigenvalues is shown to be equal to the number of negative eigenvalues of a certain class of finite Jacobi matrices and given independently a necessary and sufficient condition for the same problem to satisfy the above saturated bound. Multi-dimensional extension of the above results has been carried out in a recent work \cite{Ogurisu 3}.   
The main aim of this work is to give a sufficient condition for the Hamiltonian (after the renormalization procedure) to satisfy that the number of bound states equals to the number of point $\delta$-interactions and determine an explicit criterion for that in hyperbolic spaces $\mathbb{H}^2$ and $\mathbb{H}^3$. 
Our proof is essentially the extension of the work \cite{Ogurisu 1} to the curved spaces, namely to the hyperbolic manifolds. However, the matrix elements in the resolvent formula here are closed analytic expressions in terms of the heat kernel for a generic Riemannian manifold in contrast to the explicit analytic formula in flat spaces. 
Finally, the same spectral problem is discussed for a relativistic version of the model on $\mathbb{R}^2$ and $\mathbb{H}^2$.

\textbf{Notation.} The notation in this work is slightly  different from the one usually used in mathematics literature \cite{Albeverio 1}. We also use some terminology from quantum field theory since the point interactions in two and threee dimensional quantum mechanics are considered as a toy model for understanding many concepts originally introduced in quantum field theory, e.g., regularization, dimensional transmutation, renormalization, renormalization group, asymptotic freedom, etc. (see \cite{Jackiw} and also a recent work \cite{AlHashimi}). The notation used here can easily be converted to the one used in \cite{Albeverio 1}. For instance, the principal matrix $\Phi$ introduced here is exactly the matrix $\Gamma$ in \cite{Albeverio 1} up to a unitary transformation and all the others are implicitly related.

%%%%%%%%%%%%%%%%%%%%%%%%%%%%%%%%%%%%%%%%%%%%%%%%%%%%%%%%%%%%%%%%%%

\section{Point Interactions on Riemannian Manifolds}
\label{Point Interactions on Riemannian Manifolds}

We consider a single quantum mechanical particle intrinsically moving in a $D$ - dimensional Riemannian manifold $M$ with the metric structure $g$ (that is, the particle is constrained to $M$ a priori) in the presence of finitely many point $\delta$-interactions.  In this approach, an ordering ambiguity arises, and it leads to multiple quantization procedures which differ by a term proportional to the scalar Ricci curvature in the Hamiltonian \cite{DeWitt}. If one is interested in the class of manifolds with constant scalar curvature, then the effect of this term is simply a shift in the spectrum of the Hamiltonian. Here, we are not taking into account this curvature term for simplicity since one can essentially construct the model with this additional term. For this reason, we assume that the free Hamiltonian of a particle in $(M,g)$ is chosen as
\beqs
H_0 = - \Delta_g \;,
\eeqs
where $\Delta_g = \frac{1}{\sqrt{\mathrm{det}(g)}}
\sum_{i,j=1}^{D} \frac{\partial}{\partial x^i}
\left(g^{i j} \, \sqrt{\mathrm{det} (g)} \;
\frac{\partial}{\partial x^j}\right)$ is the Laplace-Beltrami operator (or simply Laplacian) written in local coordinates $\{x^i \}$ on a $D$-dimensional Riemannian manifold $(M,g)$. We use the units such that $\hbar=2m=1$.

Then, Hamiltonian for a single particle moving in $M$ and interacting with attractive point interactions $\delta_{g,a_i}$ supported by a finite set of isolated points $a_i \in M$ is formally given by
\beqs H=- \Delta_g  - \sum_{i=1}^{N} \lambda_i \delta_{g,a_i}(.) \;, \label{Hamiltonian}
\eeqs
where $\delta_{g,a_i}$ in the interaction term denotes the point-like Dirac $\delta$-function supported by the points $a_i \in M$ (it is defined as a continuous linear functional acting on the space of test functions $f(x)$ on $M$: $\delta_{g,a_i}(f)=f(a_i)$, or sometimes formally written as $\int_{\mathcal{M}} \delta_{g}(x,a_i) \; f(x) d_{g}^{D}x = f(a_i)$). Moreover, we suppose that $a_i \neq a_j$ for $i \neq j$. 

\textit{Unless otherwise stated throughout the paper, we restrict $(M,g)$ to two and three -dimensional Riemannian manifolds without boundary and consider two important classes of Riemannian manifolds, namely compact Riemannian manifolds and Cartan-Hadamard manifolds (geodesically complete, simply connected, noncompact Riemannian manifolds with nonpositive sectional curvature everywhere) with Ricci tensor bounded from below ($Ric(.,.) \geq  c \; g(.,.) $).}

\textit{We call the point spectrum below the spectrum of the free Hamiltonian as bound state spectrum.} Since we can always shift the spectrum by a constant without altering physics, we will always assume that the bound state spectrum lies on the negative real axis for the above class of manifolds. We now recall the essential part of the construction of the model which was already established in \cite{point interactions on manifolds2, caglar}. We here state them as a theorem and shortly give its proof in Appendix \ref{proofoffirsttheorem} for the sake of completeness of our paper.

\newtheorem{thm1}{Theorem}
\begin{thm} \label{theorem1}
Let $M$ be a compact Riemannian manifold without boundary with Ricci curvature bounded from below ($Ric(.,.) \geq  c \; g(.,.)$) or a Cartan-Hadamard (C-H) manifold without boundary, and  $H_{\epsilon}$ be the self-adjoint operator in $L^2(M)$, given by
\beqs H_\epsilon \psi(x)=-\Delta_g \psi(x)  - \sum_{j=1}^N \lambda_j(\epsilon) K_{\epsilon}(x,a_j) \int_{M} K_{\epsilon}(y,a_j) \psi(y) \; d^{D}_{g}y  \;, \label{regularized H}\eeqs 
where
$K_{\epsilon}(x,y)$ is the heat kernel defined on $M$ and $d^{D}_{g}x=\sqrt{\det(g)} dx^1 \ldots dx^D$ is the Riemannian volume form in the local coordinates. If the coupling constants $\lambda_i$'s are chosen as
\beqs \label{barecouplingdeltamanifolds} {1 \over
\lambda_{i}(\epsilon)} =  \int_\epsilon^\infty  K_t(a_i,a_i) \; e^{-t
    \mu_i^2}  \; d t \;, 
\eeqs
with $\mu_i>0$ (from the renormalization point of view, $-\mu_{i}^{2}$ is the experimentally measured bound state energy of the particle to the $i$-th point interaction while all the other centers are sufficiently far away from the $i$-th one), then for $\Re(z)<0$ sufficiently large, the resolvent of the regularized Hamiltonian (\ref{regularized H}) as $\epsilon \rightarrow 0$ converges to the following nontrivial limit (known as Krein's resolvent formula)
\beqs \label{resolvent} R(z) f(x) = R_0(z) \; f(x) + \sum_{i,j=1}^N
R_0(x,a_i|z)\, \left[\Phi^{-1}(z)\right]_{ij} R_0(z) \; f(a_j)\;, \eeqs
where $R_0(z) \; f(x)= (-\triangle_g - z)^{-1} f(x)= \int_{M} R_0(x,y|z) \; f(y) \; d^{D}_{g} y $ and
\beqs \label{phiheat3} \Phi_{ij} (z) =
\begin{cases}
\begin{split}
\int_0 ^\infty  K_{t}(a_i,a_i) \;
\left(e^{-t\mu_i^2} -e^{t z}\right) \; d t
\end{split}
& \mathrm{if} \;\;  i = j \\[2ex]
\begin{split}
- \int_0^\infty K_{t}(a_i,a_j)
\; e^{t z} \; d t
\end{split}
& \mathrm{if} \;\; i \neq j.
\end{cases} \;,
\eeqs
called the principal matrix and $R_0(x,y|z)=\int_{0}^{\infty} e^{z t} K_t(x,y) \; d t$ is the free resolvent kernel. Moreover, there exists a unique self-adjoint operator, say $H$, associated with the resolvent (\ref{resolvent}). Hence, the operator $H_\epsilon$ converges to the self-adjoint operator $H$ in the strong resolvent sense.
\end{thm}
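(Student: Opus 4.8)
The plan is to realize the regularized operator $H_\epsilon$ of (\ref{regularized H}) as a rank-$N$ perturbation of $H_0=-\Delta_g$ and then to compute its resolvent in closed form. Writing $f_j^\epsilon(x):=K_\epsilon(x,a_j)$, which lies in $L^2(M)$ because the heat kernel is smooth and, on compact or Cartan--Hadamard manifolds with $\Ric$ bounded below, enjoys Gaussian decay, the interaction term is precisely $-\sum_j\lambda_j(\epsilon)\,|f_j^\epsilon\rangle\langle f_j^\epsilon|$. Thus $H_\epsilon=H_0-\sum_j\lambda_j(\epsilon)|f_j^\epsilon\rangle\langle f_j^\epsilon|$ is a bounded self-adjoint perturbation of $H_0$, self-adjoint on $D(H_0)$ by Kato--Rellich, and the Krein (Sherman--Morrison--Woodbury) formula gives, for $f\in L^2(M)$ and $z$ in the resolvent set,
\[
R_\epsilon(z) f = R_0(z) f + \sum_{i,j=1}^N \bigl(R_0(z) f_i^\epsilon\bigr)\,[M_\epsilon^{-1}(z)]_{ij}\,\langle f_j^\epsilon, R_0(z) f\rangle, \qquad M_\epsilon^{ij}(z) = \frac{\delta_{ij}}{\lambda_i(\epsilon)} - \langle f_i^\epsilon, R_0(z) f_j^\epsilon\rangle .
\]

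Second, I would evaluate the matrix elements using the spectral representation $R_0(x,y|z)=\int_0^\infty e^{zt}K_t(x,y)\,dt$ together with the semigroup (Chapman--Kolmogorov) identity $\int_M K_s(x,u)K_t(u,y)\,d^D_g u = K_{s+t}(x,y)$ and the symmetry $K_t(x,y)=K_t(y,x)$. Carrying out the two spatial integrations collapses the product of three kernels into a single kernel with shifted time, yielding $\langle f_i^\epsilon, R_0(z) f_j^\epsilon\rangle = \int_0^\infty e^{zt} K_{t+2\epsilon}(a_i,a_j)\,dt$. Inserting the renormalization prescription (\ref{barecouplingdeltamanifolds}) for $1/\lambda_i(\epsilon)$ then makes $M_\epsilon^{ij}(z)$ entirely explicit.

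Third comes the removal of the cutoff. The key observation is that the short-time singularity $K_t(a_i,a_i)\sim(4\pi t)^{-D/2}$ makes each diagonal integral individually divergent as $\epsilon\to 0$, while the renormalized difference stays finite: since $e^{-t\mu_i^2}-e^{zt}=-(\mu_i^2+z)\,t+O(t^2)$ near $t=0$, the integrand behaves like $t^{1-D/2}$, integrable precisely for $D=2,3$. Hence the divergences cancel and $M_\epsilon^{ij}(z)\to\Phi_{ij}(z)$ of (\ref{phiheat3}) entrywise, the off-diagonal entries converging without renormalization because $a_i\neq a_j$ forces $K_t(a_i,a_j)\to 0$ exponentially as $t\to 0$. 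I would also record that $R_0(z)f_i^\epsilon\to R_0(\cdot,a_i|z)$ in $L^2(M)$ and $\langle f_j^\epsilon, R_0(z)f\rangle\to (R_0(z)f)(a_j)$ (the latter via the Sobolev embedding $W^{2,2}\hookrightarrow C^0$, valid in dimensions two and three), so that $R_\epsilon(z)\to R(z)$ strongly to the operator of (\ref{resolvent}), provided $\Phi(z)$ is invertible. Invertibility for $\Re(z)<0$ sufficiently large follows because as $z\to-\infty$ the off-diagonal entries vanish while $\Phi_{ii}(z)$ grows, so the matrix becomes diagonally dominant.

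Finally, to identify the strong limit as the resolvent of a genuine self-adjoint operator, I would argue that $\{R(z)\}$ is a pseudo-resolvent: each $R_\epsilon(z)$ satisfies the first resolvent (Hilbert) identity and $R_\epsilon(z)^*=R_\epsilon(\bar z)$, and both properties pass to the strong limit. Checking that $R(z)$ has trivial kernel (if $R(z)f=0$ then $R_0(z)f$ equals a finite combination of Green's functions, and applying $H_0-z$ produces a combination of $\delta_{a_i}$'s, which cannot equal an $L^2$ function unless $f=0$) and dense range, one concludes that $R(z)$ is the resolvent of a unique densely defined self-adjoint operator $H$; strong resolvent convergence $H_\epsilon\to H$ is then immediate. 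I expect the genuinely delicate step to be the cutoff removal in the third stage: one must justify interchanging the $\epsilon\to 0$ limit with the $t$-integration uniformly enough to both cancel the $t^{-D/2}$ divergence and control the mismatch between the cutoff $\int_\epsilon^\infty$ and the time-shift $K_{t+2\epsilon}$, which is exactly where sharp short-time heat-kernel asymptotics (and the lower Ricci bound in the C--H case) are needed.
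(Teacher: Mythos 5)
Your proposal follows essentially the same route as the paper's Appendix A: realize $H_\epsilon$ as a finite-rank self-adjoint perturbation, write its resolvent via the Krein/Woodbury formula, collapse the matrix elements with the semigroup property of the heat kernel, and remove the cutoff using the choice (\ref{barecouplingdeltamanifolds}) so that $\Phi^\epsilon\to\Phi$. The one place you genuinely diverge is the last step: to show the strong limit is an honest resolvent, the paper deliberately avoids the injectivity criterion $\ker R(z)=\{0\}$ (which it attributes to the Euclidean/Fourier setting) and instead invokes Corollary 9.5 of Pazy, verifying $-E_k R(E_k)\psi\to\psi$ along a sequence $E_k\to-\infty$; you instead prove triviality of the kernel directly by a distributional argument ($R(z)f=0$ forces $f$ to be a combination of Dirac measures at the $a_i$, hence $f=0$ in $L^2$), which is a legitimate and arguably more elementary alternative that does not need a Fourier transform. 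One caution on the step you yourself flag as delicate: with the regularization $K_\epsilon(\cdot,a_j)$ the convolution produces the time shift $K_{t+2\epsilon}$, while the counterterm is cut off at $t=\epsilon$; the mismatch $\int_\epsilon^{2\epsilon}K_t(a_i,a_i)e^{-t\mu_i^2}\,dt$ does not tend to zero for $D=2,3$ (it even diverges like $\epsilon^{-1/2}$ for $D=3$), so the cutoff and the smearing scale must actually be matched (e.g.\ smear with $K_{\epsilon/2}$, as the paper does in its relativistic section, or cut off at $2\epsilon$) before the claimed entrywise convergence $M_\epsilon\to\Phi$ holds; the paper's own appendix glosses over this point, but your write-up should not assert the cancellation without fixing the normalization.
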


\begin{rem}
The motivation for choosing the coupling constants (\ref{barecouplingdeltamanifolds})
is due to the short time asymptotic expansion of diagonal heat
kernel
\beqs K_t(x,x) \sim  {1 \over (4 \pi  t)^{D/2}} \sum_{k=0}^{\infty} u_k(x,x) \; t^k  \;, \label{asymheat}
\eeqs
for any point $x$ in a $D$-dimensional Riemannian manifold without boundary \cite{Gilkey}. Here $u_k(x, x)$ are scalar
polynomials in the curvature tensor of the manifold and its covariant derivatives at point $x$.
\end{rem}

\begin{rem} \label{heatkernelbounds}
Note that all the matrix elements of the principal matrix $\Phi$ are bounded for $\Re(z)<0$ thanks to the exponentially damping terms in the upper bounds of the heat kernel related to the geometry of $M$. In particular, based on the estimate given in \cite{Wang}, the upper bound of the heat kernel for compact manifolds with Ricci curvature bounded from below ($Ric(.,.) \geq  c \; g(.,.)$) \cite{point interactions on manifolds2}, is given by
\beqs \label{heatkernel bound1} K_t(x,y) \leq
\left[{C_1 \over V(M)} + {C_2 \over t^{D/2}}
\right] \; \mbox{exp} \left(-\frac{d^{2}(x,y)}{C_3 t}\right) \;,
\eeqs
for all $x,y \in M$ and $t>0$. For Cartan-Hadamard (C-H) manifolds \cite{Grigoryan Heat Book, Grigoryan 2ndbook}, one has
\beqs \label{heatkernel bound2} K_t(x,y) \leq
{C_4 \over t^{D/2}}  \; \mbox{exp} \left(-\frac{d^{2}(x,y)}{C_5
t}\right) \;,
\eeqs
for all $x,y \in M$ and $t>0$. Here $V(M)$ is the volume of the manifold and $d(x,y)$
is the geodesic distance between the point $x$ and $y$ on $M$.  All the constants $C_1, C_2, \cdots$ are dimensionless and depend on the geometry. In particular, the constants $C_3$ and $C_5$ are strictly greater than $4$.
\end{rem}

\begin{lem} \label{CheegerYau} [Cheeger - Yau \cite{CheegerYau}] 
If the Riemannian manifold is complete and has a Ricci tensor
bounded from below, i.e., $\Ric(.,.) \geq - (D-1) k \; g(.,.) $, with $k \in \mathbb{R}$, then we have
the following lower bound for the heat kernel:
\beqs 
K_{t} (x, y) \geq K_{t}^{k} (d (x, y)) \;, \label{heatkernellowerbound}
\eeqs
where $K_{t}^{k}$ is the heat kernel of the simply connected complete manifold of
constant sectional curvature $k$. 
\end{lem}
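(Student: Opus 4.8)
Since this is the classical Cheeger--Yau heat kernel comparison theorem, the plan is to reconstruct its proof by the parabolic subsolution/comparison method. First I would fix the pole $y \in M$ and set $r = d(x,y)$. Let $K_t^k$ be the radial heat kernel of the model space form appearing in the statement (the simply connected space form whose Ricci curvature saturates the bound $-(D-1)k\,g$). Regarded as a function $K_t^k(r)$ of the geodesic distance alone, it solves the radial heat equation
\[
\partial_t K_t^k = \partial_r^2 K_t^k + (D-1)\,\frac{s_k'(r)}{s_k(r)}\,\partial_r K_t^k ,
\]
where $s_k$ is the warping function of the model; in the relevant (negatively curved) case one has $s_k(r) = \sinh(\sqrt{k}\,r)/\sqrt{k}$, so that $(D-1)s_k'/s_k = (D-1)\sqrt{k}\coth(\sqrt{k}\,r)$. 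I would then show that the pulled-back function $u(x,t) := K_t^k(d(x,y))$ is a \emph{subsolution} of the genuine heat equation on $M$, and deduce $K_t(x,y) \geq u(x,t)$ from a comparison principle.

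The core of the argument is a short curvature computation resting on two inputs. For any radial function $f(r)$, the Laplace--Beltrami operator acts off the cut locus of $y$ as $\Delta_g f(r) = f''(r) + (\Delta_g r)\,f'(r)$, since $|\nabla r| = 1$ there. The Laplacian comparison theorem, which is exactly the analytic consequence of $\Ric \geq -(D-1)k\,g$, then supplies the pointwise bound $\Delta_g r \leq (D-1)s_k'(r)/s_k(r)$. The decisive observation is the sign: the model heat kernel is strictly decreasing in $r$, so $\partial_r K_t^k \leq 0$, and multiplying the Laplacian comparison inequality by this nonpositive factor reverses it. Combining with the radial heat equation gives
\[
\Delta_g u = \partial_r^2 K_t^k + (\Delta_g r)\,\partial_r K_t^k \ge \partial_r^2 K_t^k + (D-1)\frac{s_k'(r)}{s_k(r)}\,\partial_r K_t^k = \partial_t K_t^k = \partial_t u ,
\]
that is, $(\partial_t - \Delta_g)u \leq 0$. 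Thus $u$ is a subsolution whose initial trace, as $t \downarrow 0$, is the same Dirac mass $\delta_y$ as that of the true heat kernel.

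To close the comparison I would set $w := K_t(\cdot,y) - u$. Since $K_t$ solves the heat equation exactly and $u$ is a subsolution, $w$ is a supersolution, $(\partial_t - \Delta_g)w \geq 0$, with $w(\cdot,t) \to 0$ as $t \downarrow 0$. The parabolic minimum principle then forces $w \geq 0$, i.e. $K_t(x,y) \geq K_t^k(d(x,y))$, which is the claim. Equivalently, one may phrase this through Duhamel's formula: writing $(\partial_t - \Delta_g)u = -\phi$ with $\phi \geq 0$ yields $u(x,t) = K_t(x,y) - \int_0^t\!\int_M K_{t-s}(x,z)\,\phi(z,s)\,d^D_g z\,ds \leq K_t(x,y)$, using $K_{t-s}\geq 0$.

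The hard part is not the pointwise curvature computation but making this comparison globally rigorous, and it splits into two issues. The distance function $r=d(\cdot,y)$ is only Lipschitz and fails to be smooth on the cut locus of $y$, so the radial identity for $\Delta_g$ and the Laplacian comparison hold classically only off the cut locus; I would repair this with Calabi's upper-barrier trick, noting that across the cut locus the distributional $\Delta_g r$ acquires only a nonpositive singular part, which, multiplied by $\partial_r K_t^k \leq 0$, preserves the distributional subsolution inequality. The more serious obstacle is the noncompactness of $M$: the minimum principle for the supersolution $w$ must be justified without a boundary. Here the lower Ricci bound is essential, since it forces at most exponential volume growth and hence stochastic completeness, which in turn guarantees the $L^1$-uniqueness of the Cauchy problem and validates the parabolic minimum principle (equivalently, controls the Duhamel remainder so that no mass escapes to infinity). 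This global comparison step is where the real work lies.
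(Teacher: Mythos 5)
The paper offers no proof of this lemma: it is quoted as the Cheeger--Yau comparison theorem with a citation to \cite{CheegerYau}, so there is no internal argument to compare against. Your reconstruction is, in substance, the classical proof from the cited source: the Laplacian comparison $\Delta_g r \le (D-1)s_k'(r)/s_k(r)$ under the Ricci lower bound, the sign of $\partial_r K_t^k$ to convert it into the subsolution inequality $(\partial_t-\Delta_g)u\le 0$ for $u=K_t^k(d(\cdot,y))$, Calabi's barrier trick across the cut locus, and a globally valid parabolic minimum principle (or Duhamel remainder control) secured by stochastic completeness under the Ricci bound. Two points deserve attention. First, the radial monotonicity $\partial_r K_t^k\le 0$ of the model kernel is not free: it is a separate lemma in Cheeger--Yau's paper, and you invoke it as ``the decisive observation'' without justification; in the setting actually used here it can be read off from the explicit constant-curvature formulas (\ref{lowerboundheatkernelhperbolicspaces}), but in general it requires its own maximum-principle argument on the model space. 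Second, a sign slip: for a negatively curved model with sectional curvature $k<0$ the warping function is $\sinh(\sqrt{-k}\,r)/\sqrt{-k}$ rather than $\sinh(\sqrt{k}\,r)/\sqrt{k}$ (note that the lemma as stated in the paper already mixes conventions, writing $\Ric \ge -(D-1)k\,g$ while comparing to the space form of curvature $k$, so you should fix one convention and carry it through). With the monotonicity lemma supplied and the signs straightened out, your argument is complete and coincides with the standard proof.
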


\begin{rem} \label{Cheeger-Yau remark}
In particular, we may choose $K_{t}^{k}(d(x, y))$ as the heat kernel on the hyperbolic manifolds $\mathbb{H}^D_{\kappa}$ of constant negative sectional curvature $-\kappa^2$ since they are explicitly known \cite{Grigoryan Heat Book}:
\beqs \label{lowerboundheatkernelhperbolicspaces} 
K_{t}^{\kappa}(d(x,y)) =
\begin{cases}
\begin{split}
{\sqrt{2} \over \kappa} {1 \over (4 \pi t)^{3/2}} \; e^{- \kappa^2 t/4} \int_{ \kappa  d(x,y)}^{\infty}  
{s \; e^{-s^2/4 \kappa^2 t} \over \sqrt{\cosh s -\cosh \kappa d(x,y)}} \; d s  \;,
\end{split}
& \mathrm{for} \; D=2 \\[2ex]
\begin{split}
 { \kappa d(x,y)  \over (4 \pi t)^{3/2} \sinh  \kappa d(x,y)} \; e^{- \kappa^2 t -{d(x,y)^2 \over 4 t}} \;,
\end{split}
& \mathrm{for} \; D=3 \;.
\end{cases}
\eeqs
In case the lower bound for the Ricci curvature is positive, we may choose the lower bound as the heat kernel on $D$-dimensional flat space and the argument below becomes even simpler.
\end{rem}

\begin{lem} \label{holomorphic}
The principal matrix $\Phi(z)$ for compact manifolds with Ricci tensor bounded from below ($\Ric(.,.) \geq  c \; g(.,.) $) and for Cartan-Hadamard manifolds is a matrix-valued holomorphic  function on the complex plane, where $\Re(z)<0$. In particular, it has a branch cut along $[(D-1)^2 \kappa^2/4, \infty)$ for $D$-dimensional hyperbolic spaces of sectional curvature $-\kappa^2$. 
\end{lem}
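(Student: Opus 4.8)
The plan is to regard each entry $\Phi_{ij}(z)$ as a parameter integral $\int_0^\infty f_{ij}(t,z)\,dt$ of the integrands in (\ref{phiheat3}) and to verify the two hypotheses of the standard holomorphy-under-the-integral-sign criterion: (i) for each fixed $t>0$ the map $z\mapsto f_{ij}(t,z)$ is entire, and (ii) the integral converges uniformly on every compact subset of $\{\Re(z)<0\}$. Hypothesis (i) is immediate, since $f_{ij}(t,z)$ depends on $z$ only through the entire factor $e^{tz}$ (and, on the diagonal, an additional $z$-independent term). Granting (i)–(ii), holomorphicity follows by combining Fubini and Morera: for any closed contour $\gamma\subset\{\Re(z)<0\}$ we may interchange $\oint_\gamma$ with $\int_0^\infty$, and since $e^{tz}$ is entire the inner contour integral vanishes, giving $\oint_\gamma\Phi_{ij}(z)\,dz=0$. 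So the whole matter reduces to the domination estimate (ii).

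For (ii) I would fix a compact $\mathcal{K}\subset\{\Re(z)<0\}$, choose $\delta>0$ with $\Re(z)\le-\delta$ on $\mathcal{K}$ so that $|e^{tz}|\le e^{-\delta t}$, and split each integral at $t=1$. For $i\ne j$ one has $d(a_i,a_j)>0$, so the Gaussian factor $\exp(-d^2(a_i,a_j)/(C_3 t))$ (resp. $C_5$) from (\ref{heatkernel bound1})–(\ref{heatkernel bound2}) decays super-polynomially as $t\to0^+$ and dominates the $t^{-D/2}$ singularity, while for large $t$ the factor $e^{-\delta t}$ beats the bounded (compact) or $t^{-D/2}$ (C--H) prefactor. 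On the diagonal the short-time expansion (\ref{asymheat}) gives $K_t(a_i,a_i)\sim(4\pi t)^{-D/2}$, but the cancellation $e^{-t\mu_i^2}-e^{tz}=-t(\mu_i^2+z)+O(t^2)$ supplies an extra power of $t$, so the integrand is $O(t^{1-D/2})$ near $0$ and hence integrable for $D=2,3$; the large-$t$ tail is controlled by $e^{-t\mu_i^2}$ and $e^{-\delta t}$. In each case one obtains a $t$-integrable majorant independent of $z\in\mathcal{K}$, yielding uniform convergence.

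For the last assertion I would use that on $\mathbb{H}^D_\kappa$ the Cheeger--Yau bound of Lemma \ref{CheegerYau} is an equality, so the entries of $\Phi$ are computed from the explicit kernels (\ref{lowerboundheatkernelhperbolicspaces}). In dimension three, inserting $K_t^\kappa$ and using $\int_0^\infty t^{-3/2}\exp\!\big(-d^2/(4t)-(\kappa^2-z)t\big)\,dt=(2\sqrt{\pi}/d)\,e^{-d\sqrt{\kappa^2-z}}$ gives a closed form whose only $z$-dependence is through $\sqrt{\kappa^2-z}=\sqrt{(D-1)^2\kappa^2/4-z}$; with the principal branch this is holomorphic except where the radicand is a nonpositive real, i.e. it has a branch point at $z=(D-1)^2\kappa^2/4$ and a branch cut along $[(D-1)^2\kappa^2/4,\infty)$. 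The large-time behaviour $K_t^\kappa\sim C\,t^{-3/2}e^{-(D-1)^2\kappa^2 t/4}$ also shows directly that the defining integral converges precisely for $\Re(z)<(D-1)^2\kappa^2/4$, which is the natural region before continuation. The two-dimensional case is treated the same way after substituting the inner $s$-integral in (\ref{lowerboundheatkernelhperbolicspaces}); the entries are then expressible through Legendre (hypergeometric) functions of argument $\cosh(\kappa d)$ whose relevant branch point again sits at $z=\kappa^2/4=(D-1)^2\kappa^2/4$.

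The holomorphy argument of the first two paragraphs is routine; I expect the genuine work to lie in the final step, namely carrying out the Laplace-type integrals against the hyperbolic kernels, especially the two-dimensional kernel with its extra $s$-integration, and verifying that the continuation has no singularities beyond the claimed cut. Care is also needed at the diagonal in the hyperbolic case, where the renormalized combination $e^{-t\mu_i^2}-e^{tz}$ must be split so that the $\mu_i$-dependent piece stays entire and only the $z$-piece produces the branch point.
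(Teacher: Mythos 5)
Your proposal is correct and follows essentially the same route as the paper: holomorphy is obtained from uniform convergence of the parameter integral on compact subsets of $\{\Re(z)<0\}$ (splitting at $t=1$, using the heat-kernel upper bounds off the diagonal and the cancellation in $e^{-t\mu_i^2}-e^{tz}$ to gain a power of $t$ on the diagonal — the paper phrases this via $|f(z)-f(-\mu_i^2)|\le\max|f'|\,|z+\mu_i^2|$ rather than a Taylor expansion, but it is the same estimate), and the branch cut is read off from the explicit closed forms of $\Phi$ on $\mathbb{H}^3_\kappa$ and $\mathbb{H}^2_\kappa$, the latter via the Fubini interchange producing digamma and Legendre-$Q$ functions of $\tfrac12+\sqrt{-z/\kappa^2+1/4}$. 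The only cosmetic difference is that you argue via Morera plus Fubini where the paper cites Olver's holomorphy-under-the-integral theorem.
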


\begin{proof}
The proof is essentially based on the following theorem (theorem 1.1 in Chapter 2 of \cite{Olver}): Let $t$ be a real variable ranging over the interval $(0,\infty)$ and $z$ a complex variable ranging over a domain $\mathcal{R}$. Assume that the function $f(z,t)$ satisfies the following conditions: (i) $f(z,t)$ is a continuous function of both variables. (ii) For each fixed value of $t$, $f(z,t)$ is a holomorphic function of $z$. (iii) The integral $F(z)=\int_{0}^{\infty} f(z,t)\; d t$ converges uniformly at both limits in any compact set in $\mathcal{R}$. Then, $F(z)$ is holomorphic in $\mathcal{R}$ and its derivatives of all orders may be found by differentiating under the integral sign.  
It is self-evident that two hypotheses of the above theorem applied to the matrix elements of the principal matrix $\Phi$ are satisfied  
since the heat kernel $K_t(x,y)$ defined on $M \times M \times (0,\infty)$ is $C^1$ - function with respect to the variable $t$ and exponential function $e^{t z}$ is an entire function for each fixed value of $t$.   What is left is to show that all the matrix elements converge uniformly on a compact subset of the chosen region $\mathcal{R}$. Let $\mathcal{R}$ be the complex plane with $\Re(z)<0$.
Here we choose the compact subset of the region as $\mathcal{D}=\{ z \in \mathbb{C}| - \epsilon_2 \leq \Re(z) \leq -\epsilon_1\; \& \; \eta_2 \leq \Im(z) \leq \eta_1 \}$, where $\epsilon_{1}, \epsilon_2$ are positive. 
We first prove the uniform convergence for the diagonal part of the principal matrix on $\mathcal{D}$. Since the integrand is unbounded due to the short time asymptotic expansion of the diagonal heat kernel (\ref{asymheat}), we split the integral into two parts: $\int_{0}^{1}  K_{t}(a_i,a_i) \;
(e^{-t\mu_i^2} -e^{t z}) \; d t$ and $\int_1 ^\infty  K_{t}(a_i,a_i) \;
(e^{-t\mu_i^2} -e^{t z}) \; d t$. We first use the upper bounds of the heat kernel for compact manifolds with Ricci tensor bounded below ($\Ric(.,.) \geq  c \; g(.,.)$) and for Cartan-Hadamard manifolds given in the remark \ref{heatkernelbounds}. For the sake of simplicity, we do not have to analyse the problem separately for each class of manifold since the volume term in the upper bound can be combined into the proof by essentially following the same line of arguments. In the first integral, we have
\beqs
|K_t(a_i,a_i) \; (e^{-t\mu_i^2} -e^{t z}) | & \leq & C_4 \left|{e^{-t\mu_i^2} -e^{t z} \over t^{D/2}}\right| \;,
\eeqs 
for all $t>0$ and $i=1, \ldots, N$. If we define the following holomorphic function  $f(z)=-{e^{t z} \over t^{D/2}}$ for each value of $t>0$, then it is easy to show that $|f(z)-f(-\mu_{i}^2)| =|\int_{\gamma} f'(\zeta) d\zeta |\leq \max_{\zeta \in \mathcal{D}} |f'(\zeta)| L(\gamma)$ for any curve $\gamma$ connecting $-\mu_{i}^2$ to any $z$ in the above compact region $\mathcal{D}$. Then, we can always choose $\gamma$ as a straight line on $\mathcal{D}$ connecting the points $-\mu_{i}^2$ and $z$, i.e., $L(\gamma)=|z+\mu_{i}^2|$. Hence we obtain 
\beqs
|K_t(a_i,a_i) \; (e^{-t\mu_i^2} -e^{t z}) | & \leq & C_4 |z+\mu_{i}^2| \max_{\zeta \in D} { e^{t \Re(\zeta) } \over t^{{D \over 2}-1}} \leq  C_4 (\sqrt{\epsilon_{1}^2 +\eta_{1}^2} +\mu_{i}^2)\; { e^{- t \epsilon_1} \over t^{{D \over 2}-1}} \;,
\eeqs 
and the right hand side of the inequality is integrable on the interval $(0,1)$ for $D=2,3$. In the second integral, we have $|K_t(a_i,a_i) \; (e^{-t\mu_i^2} -e^{t z}) | \leq C_4 |e^{-t\mu_i^2} -e^{t z} | \leq C_4 (e^{-t \mu_{i}^2} + e^{-t \epsilon_1}) $, and this is clearly integrable on $(1,\infty)$. As for the off-diagonal matrix elements of the principal matrix, we have $|K_t(a_i,a_j) e^{t z}| \leq C_4 \exp (- d^2(a_i,a_j)/ C_3 t) t^{-D/2}$ in the region $\mathcal{D}$, which is integrable on $(0,\infty)$. Hence, we show that all the matrix elements of the principal matrix are uniformly convergent on the compact subset $\mathcal{D}$ of $\mathcal{R}$ as a consequence of the Weierstrass's M-test \cite{Copson}. Since all its matrix elements of $\Phi$ are holomorphic, the principal matrix $\Phi$ is matrix-valued holomorphic function on $\mathcal{R}$  and 
the derivatives of all orders of $\Phi$ with respect to $z$ can be found by differentiating under the sign of integration.

If we do not know the exact explicit expression of the principal matrix, it is in general difficult and rather involved to determine its branch cut structure for a generic class of Riemannian manifold. 
For the three dimensional hyperbolic spaces $\mathbb{H}^{3}_{\kappa}$, we have the explicit expression for the principal matrix thanks to the above explicit expression of the heat kernel (\ref{lowerboundheatkernelhperbolicspaces}),
\beqs
\Phi_{ij}(z)={1 \over 4 \pi} \left(\sqrt{\kappa^2 -z}-\sqrt{\kappa^2+\mu_{i}^{2} }\right) \delta_{ij} - (1-\delta_{ij}) \left( {\kappa \exp \left(- d(a_i,a_j)  \sqrt{\kappa^2 -z}\right) \over 4 \pi \sinh \left(\kappa d(a_i,a_j) \right) } \right)  \;, \label{PhiH3}
\eeqs
which clearly has the branch cut along $[\kappa^2, \infty)$. 
As for the two dimensional hyperbolic spaces $\mathbb{H}^{2}_{\kappa}$, we can also find the explicit expression of the principal matrix by interchanging the order of $t$ and $s$-integrations (Fubini's theorem),   
\beqs \begin{split}
\Phi_{ij}(z)  = & {1 \over 2 \pi} \left[ \psi \left( {1 \over 2} + \sqrt{-{z \over \kappa^2} + {1 \over 4}} \right)  - \psi \left( {1 \over 2} + \sqrt{{\mu_{i}^2 \over \kappa^2} +{1 \over 4}} \right) \right] \;\delta_{ij} \\[2ex] &  \hspace{4cm} - \, {1 \over 2 \pi} (1-\delta_{ij}) \; Q_{ {1 \over 2} + \sqrt{-{z \over \kappa^2} + {1 \over 4}} } \left(\cosh(\kappa d(a_i,a_j) )\right) \;,
\label{PhiH2} 
\end{split}
\eeqs
where we have used the integral representation of the digamma function \cite{Lebedev}
\beqs
\psi(z)=\int_{0}^{\infty} \left( {e^{-t} \over t} -{e^{- t z} \over 1-e^{-t}} \right) \; d t \;, \label{intreppsi}
\eeqs
for $\Re(z)>0$, and the integral representation of the Legendre function of second type \cite{Lebedev}
\beqs
Q_{\lambda}(\cosh a)=\int_{a}^{\infty} {e^{-(\lambda + {1 \over 2}) r} \over \sqrt{2 \cosh r -2 \cosh a}} \; d r \;, \label{intrepQ}
\eeqs
for real and positive $a$ and $\Re(\lambda)>-1$. From the above result (\ref{PhiH2}), we see that the branch cut is along $[\kappa^2/4, \infty)$, which completes the proof.
\end{proof}

For real values of $z$, the principal matrix $\Phi$ is symmetric, i.e., $\Phi(z)^{*}=\Phi(z)$, so that $\Phi(z)$ is a symmetric matrix-valued holomorphic function so that its eigenvalues and eigenprojections are holomorphic on the real axis due to the theorem 6.1 in \cite{Kato}. Throughout the paper, $*$ denotes the adjoint. One can also make the analytical continuation of the principal matrix $\Phi$ from the region $\mathcal{R}$ onto the largest possible set of complex plane except possibly the real axis.
As a consequence of this theorem, the operation of taking the derivative of the matrix elements of the principal matrix under the integral sign is justified. This operation without testing its validity was used in our previous works to find the flow of eigenvalues $\omega_n$ of the principal matrix (i.e., ${d \omega_n \over d E} >0$).

Now, we are going to give a new result about the essential spectrum of the Hamiltonian $H$ for compact manifolds in Proposition \ref{essential spectrum prop} and discuss some other spectral properties (\textit{partly} given in our previous work \cite{point interactions on manifolds2}) of our problem in Theorem \ref{compact spectrum}.

\begin{prop} \label{essential spectrum prop}
Let $(M,g)$ be a compact connected Riemannian manifold without boundary. Then, the essential spectrum of the operator $H$ is empty.
\end{prop}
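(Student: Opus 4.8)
The plan is to prove the stronger statement that the resolvent $R(z)$ of $H$ is itself a \emph{compact} operator on $L^2(M)$; emptiness of the essential spectrum then follows at once, since a self-adjoint operator with compact resolvent has purely discrete spectrum.

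First I would record the one place where compactness of $M$ is genuinely used: on a compact Riemannian manifold without boundary the free resolvent $R_0(z)=(-\Delta_g-z)^{-1}$ is itself compact. Indeed, elliptic regularity gives that $R_0(z)$ maps $L^2(M)$ boundedly into $W^{2,2}(M)$, and the Rellich--Kondrachov theorem furnishes a compact embedding $W^{2,2}(M)\hookrightarrow L^2(M)$. Equivalently, $-\Delta_g$ has purely discrete spectrum accumulating only at $+\infty$, so that $\sigma_{\mathrm{ess}}(H_0)=\emptyset$.

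Next I would analyse the correction term in Krein's formula (\ref{resolvent}). Fix $z$ with $\Re(z)<0$ sufficiently large that $\Phi(z)$ is invertible (Theorem \ref{theorem1} and Lemma \ref{holomorphic}), and write
\[
\big(R(z)-R_0(z)\big)f(x)=\sum_{i,j=1}^{N}R_0(x,a_i|z)\,\big[\Phi^{-1}(z)\big]_{ij}\,\big(R_0(z)f\big)(a_j).
\]
I would then check that this defines a bounded operator of rank at most $N$, its range being contained in $\operatorname{span}\{R_0(\cdot,a_i|z):i=1,\dots,N\}$. This requires two verifications: (a) each function $R_0(\cdot,a_i|z)=\int_0^\infty e^{zt}K_t(\cdot,a_i)\,dt$ belongs to $L^2(M)$, and (b) the point evaluation $f\mapsto(R_0(z)f)(a_j)$ is a bounded linear functional on $L^2(M)$. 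Both are controlled by the Gaussian heat-kernel upper bound (\ref{heatkernel bound1}): in dimensions $D=2,3$ the free resolvent kernel is square-integrable near the diagonal (logarithmically divergent for $D=2$, of order $d^{2-D}(x,y)$ for $D=3$), so that $\sup_{x\in M}\|R_0(x,\cdot|z)\|_{L^2(M)}<\infty$ and hence $|(R_0(z)f)(a_j)|\le\|R_0(a_j,\cdot|z)\|_{L^2}\,\|f\|_{L^2}$.

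Finally I would combine the two steps: $R(z)=R_0(z)+(\text{finite-rank operator})$ is the sum of two compact operators, hence compact. Therefore $H$ has compact resolvent and purely discrete spectrum, so $\sigma_{\mathrm{ess}}(H)=\emptyset$. Alternatively one may invoke Weyl's theorem on the stability of the essential spectrum under relatively compact perturbations \cite{Reed Simon V4}: since $R(z)-R_0(z)$ is compact, $\sigma_{\mathrm{ess}}(H)=\sigma_{\mathrm{ess}}(H_0)=\emptyset$. The main obstacle, and the only delicate point, is the rigorous justification that the Krein correction term is a genuine bounded finite-rank operator despite the on-diagonal singularity of the Green's function; but the heat-kernel estimates of Remark \ref{heatkernelbounds} keep this singularity square-integrable in $D=2,3$, so no real difficulty arises.
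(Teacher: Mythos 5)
Your argument is correct, and it reaches the conclusion by a genuinely different (and in this setting shorter) route than the paper. The paper does not use compactness of $R_0(z)$ at all: it shows that $R(z)-R_0(z)$ is \emph{trace class} by explicitly computing $\Tr[(H-z)^{-1}-(H_0-z)^{-1}]$ in an orthonormal basis, controlling the resulting heat-kernel integrals via Cauchy--Schwarz and the semigroup property, and bounding $\|\Phi^{-1}(-E+i\epsilon)\|$ through a Neumann series for $\Phi=\mathbb{D}-\mathbb{K}$ with $E$ large; it then invokes Weyl's essential spectrum theorem together with the known discreteness of $\sigma(-\Delta_g)$. You instead observe that the Krein correction is manifestly an operator of rank at most $N$ --- needing only that $R_0(\cdot,a_i|z)\in L^2(M)$ and that $\Phi(z)$ is invertible at one point, both of which Theorem \ref{theorem1} and the heat-kernel bounds already supply --- and that on a compact manifold $R_0(z)$ is itself compact, so $R(z)$ is compact and $H$ has purely discrete spectrum. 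Your version avoids the trace computation and the operator-norm estimate on $\Phi^{-1}$ entirely; what the paper's heavier route buys is (i) the stronger trace-class conclusion and (ii) an argument that transfers verbatim to the Cartan--Hadamard case treated in the subsequent corollary, where $R_0(z)$ is \emph{not} compact and one genuinely needs relative compactness of the difference rather than compactness of $R(z)$ itself (your closing ``alternatively, by Weyl's theorem'' remark is exactly the paper's argument). The one point you should make explicit rather than defer to the heat-kernel remark is the square-integrability of $R_0(\cdot,a_i|z)$ near the diagonal for $D=2,3$, but as you note this is immediate from the local singularity of the Green's function, so there is no gap.
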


The point interactions on flat spaces are known as finite rank perturbations to the free Hamiltonian so that the essential spectrum of the Hamiltonian is the same as the one of the free Hamiltonian. Hence, it is expected that this is also true on Riemannian manifolds. A technical proof of it is given in Appendix \ref{proof of prop1}.

\begin{thm} \label{compact spectrum} Let $(M,g)$ be a compact connected Riemannian manifold without boundary. Then, the spectrum of $H$ is purely discrete and contained in $(-\infty,0) \cup \{\sigma_l\} $, where $\{\sigma_l\} = \{ 0=\sigma_0 \leq \sigma_1 \leq \ldots \} $ and $\sigma_l \rightarrow \infty$ as $l \rightarrow \infty$, and each $\sigma_l$ has finite multiplicity. It has at most $N$ (negative) eigenvalues counting multiplicity and $-\nu^2$ ($\nu$ is real and positive) belongs to the negative part of the point spectrum iff $\det \Phi(-\nu^2)=0$ (characteristic equation). The multiplicity of the eigenvalue $-\nu^2$ equals to the multiplicity of the eigenvalue zero of the matrix $\Phi(-\nu^2)$. Moreover, let $E=-\nu_{k}^{2}$ be an eigenvalue of $H$, then the corresponding eigenfunctions $\psi_k(x)$ are given by
\beqs
\begin{split}
\psi_k(x) = & \left[ \sum_{i,j=1}^N \overline{A_i(\nu_k)}
\int_{0}^{\infty}  t
\;  K_t(a_i,a_j) e^{-t \nu_{k}^2} \,
A_j(\nu_k) \; dt  \right]^{-\frac{1}{2}} \; \\[2ex] & \hspace{5cm}\times
\int_{0}^{\infty} e^{-t
\nu_{k}^2} \sum_{i=1}^N A_i(\nu_k) K_{t}(a_i, x) \; d t \;,
\label{wavefunction heat delta} \end{split} \eeqs
where $(A_1,A_2,\ldots, A_N)$ are the eigenvectors with eigenvalue zero of the matrix $\Phi(-\nu_{k}^2)$ and  the ground state is nondegenerate and the corresponding eigenfunction can be chosen strictly positive.  
\end{thm}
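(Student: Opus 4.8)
The plan is to extract the entire spectral picture from Krein's resolvent formula (\ref{resolvent}) by identifying the spectrum of $H$ with the poles of $R(z)$. First I would settle the global structure. Since $M$ is compact, $R_0(z)$ is compact and $H_0=-\Delta_g$ has purely discrete spectrum $\{\sigma_l\}$ accumulating only at $+\infty$, each $\sigma_l$ of finite multiplicity. Combined with Proposition \ref{essential spectrum prop}, which gives $\sigma_{\mathrm{ess}}(H)=\emptyset$, this forces $H$ to have purely discrete spectrum. Inspecting (\ref{resolvent}), the singularities of $R(z)$ are exactly the poles of $R_0(z)$, lying at the $\sigma_l$, together with the points where $\Phi(z)$ fails to be invertible, i.e. where $\det\Phi(z)=0$. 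Hence the spectrum sits in $(-\infty,0)\cup\{\sigma_l\}$, and a point $-\nu^2<0$, being regular for $R_0$, is an eigenvalue precisely when $\det\Phi(-\nu^2)=0$, which is the characteristic equation.

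The heart of the argument is a monotonicity statement for the eigenvalues of $\Phi$. Differentiating (\ref{phiheat3}) under the integral sign, legitimate by Lemma \ref{holomorphic}, along the real axis $z=E$ gives $\Phi'(E)=-M(E)$ with $M(E)_{ij}=\int_0^\infty t\,K_t(a_i,a_j)e^{tE}\,dt$. Using the semigroup identity $K_t(a_i,a_j)=\int_M K_{t/2}(a_i,w)K_{t/2}(w,a_j)\,d^{D}_{g}w$ and the distinctness of the $a_i$, one checks $\langle A,M(E)A\rangle=\int_0^\infty t\,e^{tE}\int_M|\sum_i A_i K_{t/2}(a_i,w)|^2\,d^{D}_{g}w\,dt>0$, so $M(E)$ is positive definite and $\Phi'(E)$ negative definite. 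As $\Phi$ is real symmetric and holomorphic on the real axis, its eigenvalues $\omega_n(E)$ and eigenprojections are analytic (theorem 6.1 in \cite{Kato}), and the Feynman--Hellmann relation $\omega_n'(E)=\langle A_n,\Phi'(E)A_n\rangle=-\langle A_n,M(E)A_n\rangle<0$ shows every branch is strictly decreasing in $E$ (equivalently strictly increasing in the binding energy $\nu^2=-E$, the flow recorded above). A strictly monotone branch vanishes at most once, so $\det\Phi(E)=\prod_n\omega_n(E)$ has at most $N$ zeros on $(-\infty,0)$ counting multiplicity, giving at most $N$ negative eigenvalues; moreover $\dim\ker\Phi(-\nu^2)$ equals the number of branches vanishing at $-\nu^2$, each contributing a simple pole to $\Phi^{-1}$, which is the claimed multiplicity.

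For the eigenfunctions I would compute the residue of (\ref{resolvent}) at $E_0=-\nu_k^2$. Since $R_0$ is holomorphic there, the pole sits entirely in $\Phi^{-1}(z)$, whose residue is $\sum_n \omega_n'(E_0)^{-1}A_n A_n^{\mathrm T}$ over the vanishing branches. Substituting into (\ref{resolvent}) shows the eigenprojection is finite rank, spanned by $\phi_A(x)=\sum_i A_i\int_0^\infty e^{-\nu_k^2 t}K_t(a_i,x)\,dt$ with $A\in\ker\Phi(E_0)$, the unnormalized form in (\ref{wavefunction heat delta}). The normalization follows from the same semigroup computation: $\int_M\phi_A(x)\overline{\phi_{A'}(x)}\,d^{D}_{g}x$ collapses, after the substitution $u=s+t$ and the elementary $\int_0^u ds=u$, to $\langle A',M(E_0)A\rangle$, reproducing the prefactor $[\sum_{i,j}\overline{A_i}\int_0^\infty t K_t(a_i,a_j)e^{-t\nu_k^2}A_j\,dt]^{-1/2}$.

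Finally, the ground state. The off-diagonal entries $\Phi_{ij}(E)=-\int_0^\infty K_t(a_i,a_j)e^{tE}\,dt$ are strictly negative by positivity of the heat kernel, so for large $c$ the matrix $B=cI-\Phi(E)$ has strictly positive entries; by Perron--Frobenius its top eigenvalue, equivalently the smallest eigenvalue of $\Phi(E)$, is simple with a strictly positive eigenvector. Tracking the decreasing branches, the most negative zero of $\det\Phi$ is where this smallest branch first vanishes, all others remaining positive there, so the ground state is nondegenerate and its eigenvector may be chosen with $A_i>0$; since $K_t(a_i,x)>0$ on the connected manifold, $\psi(x)=\phi_A(x)$ is then strictly positive. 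I expect the main obstacle to be the bookkeeping in the residue/multiplicity step, namely verifying that coincident zeros of several branches yield a simple pole of $\Phi^{-1}$ whose residue has rank $\dim\ker\Phi$, together with the sign and positive-definiteness of $M(E)$ that simultaneously drive the monotonicity and pin down the normalization.
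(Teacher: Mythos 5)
Your proposal is correct and follows essentially the same route as the paper's Appendix C: Krein's resolvent formula together with Proposition \ref{essential spectrum prop} for the global structure and the characteristic equation, Feynman--Hellmann monotonicity of the eigenvalue branches $\omega_n$ (driven by positivity of the heat kernel) for the bound of $N$ negative eigenvalues, a Riesz/residue computation of the eigenprojection for the explicit eigenfunctions, and Perron--Frobenius for nondegeneracy and positivity of the ground state. The only step the paper treats that you pass over is ruling out nonnegative eigenvalues coming from the (analytically continued) principal matrix away from the $\sigma_l$, which it does by noting that the corresponding candidate eigenfunction cannot lie in $L^2(M)$ unless it vanishes identically.
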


The proof of this theorem was essentially given in our previous work \cite{point interactions on manifolds2} so we give it in Appendix \ref{proof of spectrumtheorem} for the completeness of the paper.

Since the Laplacian $-\triangle_g$ is symmetric and positive, its spectrum is contained in
$[0,\infty)$. When $M=\mathbb{R}^{D}$, then the spectrum of $-\triangle$ has no point spectrum. For a general noncompact Riemannian manifold $M$, the spectrum may include positive eigenvalues \cite{Donnelly0}. Nevertheless, under some mild conditions, it is expected  that there does not exist any positive eigenvalue with the finite multiplicity of $-\triangle_g$. This is stated as a conjecture in \cite{CL}: 
\begin{conj} \label{essential spectrum of laplacian}
Let $M$ be a complete noncompact Riemannian manifold with Ricci tensor bounded below ($\Ric(.,.) \geq  c \; g(.,.)$). Then, the essential spectrum of $-\triangle_g$ on functions is a connected subset of the positive real line $[a,\infty)$. 
\end{conj}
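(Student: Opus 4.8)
The plan is to establish the two halves of the statement separately. Since $-\Delta_g$ is self-adjoint and non-negative, $a := \inf \sigma_{\mathrm{ess}}(-\Delta_g) \geq 0$ is well-defined, and because $\sigma_{\mathrm{ess}}$ is closed the claim reduces to showing that the entire ray $[a,\infty)$ is contained in $\sigma_{\mathrm{ess}}(-\Delta_g)$, i.e. that the essential spectrum has \emph{no gaps} above $a$. The first reduction I would invoke is the decomposition principle of Donnelly and Li: on a complete manifold $\sigma_{\mathrm{ess}}(-\Delta_g)$ is invariant under deletion of any compact region (with, say, Dirichlet conditions on the resulting boundary), so it is governed entirely by the geometry of the ends of $M$. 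It therefore suffices to construct, for each $\lambda \geq a$, a \emph{singular Weyl sequence} $\{u_k\} \subset \mathrm{Dom}(-\Delta_g)$ with $\|u_k\|=1$, $u_k \rightharpoonup 0$ weakly, and $\|(-\Delta_g - \lambda)u_k\| \to 0$, whose supports escape to infinity.

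For the construction I would model the quasimodes on Euclidean plane waves. Fix $\lambda > a$ (the case $\lambda = a$ follows by closedness) and a vector $\xi$ with $|\xi|^2 = \lambda$; choose points $p_k \to \infty$ so that the geodesic balls $B(p_k, r)$ are pairwise disjoint for a fixed radius $r>0$, together with a single smooth cutoff $\chi$ supported in $B(0,r)$. In normal (or harmonic) coordinates centered at $p_k$ one sets $u_k \propto \chi \, e^{i \xi \cdot x}$ and computes $(-\Delta_g - \lambda)u_k$. Writing $\Delta_g = \Delta_{\mathrm{flat}} + (\Delta_g - \Delta_{\mathrm{flat}})$, the flat part cancels $\lambda$ up to the boundary terms produced by $\chi$, which are $O(1)$ and independent of $k$, while the remaining error is controlled by the deviation of $g$ from the Euclidean metric on $B(p_k,r)$ in $C^1$. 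The Ricci lower bound $\Ric(.,.) \geq c\,g(.,.)$ enters here: via Bishop--Gromov it bounds the volume growth from above, and together with the Cheeger--Yau lower bound (Lemma~\ref{CheegerYau}) and the Li--Yau upper estimate it keeps the masses $\|u_k\|$ bounded away from $0$ and $\infty$, so that a diagonal subsequence becomes a genuine singular sequence and places $\lambda$ in $\sigma_{\mathrm{ess}}$.

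The hard part — and the reason the statement remains a conjecture — is guaranteeing that the ends of $M$ actually contain regions that are \emph{uniformly} close to Euclidean at the fixed scale $r$, which is exactly what the cancellation above requires. A one-sided Ricci bound yields an upper bound on volume but no matching lower bound, so along some sequence $p_k \to \infty$ the injectivity radius may collapse to zero; in that regime no almost-flat chart of fixed size exists, the Gromov--Hausdorff limits of the ends may be singular or lower-dimensional, and the plane-wave ansatz produces no valid approximate eigenfunction of the full operator. Closing this gap would seem to demand either a supplementary non-collapsing hypothesis (a lower volume or injectivity-radius bound, which through Anderson-type harmonic-radius estimates would make the $C^1$-comparison rigorous) or an entirely different gap-filling mechanism extracted directly from the long-time heat-kernel asymptotics. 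Handling the collapsed ends under the bare Ricci-lower-bound hypothesis is the substantive obstacle that I would not expect the quasimode method alone to overcome.
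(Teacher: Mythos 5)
You are attempting to prove a statement that the paper does not prove: it is presented explicitly as an open conjecture, quoted from Charalambous and Lu \cite{CL}, and the author only \emph{uses} the conclusion later in the restricted setting of Cartan--Hadamard manifolds (where it is a theorem of Donnelly). So there is no proof in the paper to compare yours against, and the honest answer is that your proposal does not close the gap either --- as you yourself acknowledge in the final paragraph. To your credit, the obstruction you identify is the genuine one: a one-sided Ricci bound controls volume from above (Bishop--Gromov) but gives no lower volume or injectivity-radius bound, so the ends may collapse, no almost-Euclidean chart of a fixed scale $r$ need exist near infinity, and the plane-wave quasimode $\chi\, e^{i\xi\cdot x}$ cannot be constructed. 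One smaller point you pass over: before setting $a:=\inf\sigma_{\mathrm{ess}}(-\Delta_g)$ you should verify that $\sigma_{\mathrm{ess}}(-\Delta_g)\neq\varnothing$; this does hold here, but it requires an argument (Brooks' bound $\inf\sigma_{\mathrm{ess}}\leq\mu^2/4$ in terms of the exponential volume growth rate, which is finite precisely because of the Ricci lower bound), and it fails for general complete noncompact manifolds without the curvature hypothesis.

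It is also worth noting that the partial results that do exist (Charalambous--Lu, and earlier Sturm, Wang, Lu--Zhou for asymptotically nonnegative Ricci curvature) deliberately avoid the $L^2$ quasimode construction you propose. They use an $L^1$-type criterion for membership in the essential spectrum: $\lambda\in\sigma_{\mathrm{ess}}(-\Delta_g)$ provided one can find, for every $\varepsilon>0$, a test function $\psi$ supported near infinity with $\|(-\Delta_g-\lambda)\psi\|_{L^1}<\varepsilon\|\psi\|_{L^1}$ (together with a matching $L^\infty$ control). The test functions are built from the distance function, $\psi=\chi(r)e^{i\sqrt{\lambda}\,r}$, and the only geometric input needed is the Laplacian comparison $\Delta_g r\leq (D-1)\sqrt{k}\coth(\sqrt{k}\,r)$, which \emph{is} available from the Ricci lower bound alone and requires no non-collapsing. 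The reason the conjecture remains open even along that route is that one still needs the error terms involving $\Delta_g r$ to be small \emph{on average} over the support of $\psi$, which is where the additional hypotheses (asymptotically nonnegative Ricci, or volume conditions) enter. So your diagnosis of where the difficulty lies is correct, but the quasimode ansatz is not the mechanism the literature uses even in the cases that are known; if you want to make partial progress, the $L^1$ criterion with radial test functions is the more promising tool.
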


Following the same argument given in the proof of Proposition \ref{essential spectrum prop} and using the upper bound of the heat kernel for Cartan-Hadamard manifolds (\ref{heatkernel bound2}), it is easy to see that  the essential spectrum of $H$ is the same as that of $H_0$. In other words, the Hamiltonian $H$ is a compact perturbation to the free Hamiltonian $H_0$:
\begin{cor}
Let $M$ be Cartan-Hadamard manifold. Then, the point spectrum of $H$ in the positive real axis is empty{\color{red},} and the essential spectrum of the operator $H$ is 
$ \sigma_{ess}(H) =[a,\infty)$. In particular, $a=(D-1)^2 \kappa^2/4$ for $D$-dimensional hyperbolic manifolds of sectional curvature $-\kappa^2$ \cite{Donnelly 2}.
\end{cor}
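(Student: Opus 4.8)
The plan is to prove the two assertions of the corollary separately, both built on the Krein resolvent formula (\ref{resolvent}) and proceeding exactly along the lines indicated by the remark that precedes the statement.

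For the identification of the essential spectrum I would start from (\ref{resolvent}), which exhibits the difference of resolvents as
\[
R(z) f(x) - R_0(z) f(x) = \sum_{i,j=1}^{N} R_0(x,a_i|z)\,[\Phi^{-1}(z)]_{ij} \int_{M} R_0(a_j,y|z)\, f(y)\, d^{D}_{g} y \;,
\]
whose range is contained in the linear span of the $N$ functions $x \mapsto R_0(x,a_i|z)$; hence $R(z)-R_0(z)$ has rank at most $N$ for every $z$ with $\Re(z)<0$. To turn this into a genuine operator statement I must check that each generating kernel $R_0(\cdot,a_i|z)=\int_{0}^{\infty} e^{zt} K_t(\cdot,a_i)\, dt$ belongs to $L^2(M)$, and this is exactly where the Cartan-Hadamard heat kernel estimate (\ref{heatkernel bound2}) enters: the Gaussian factor $\exp(-d^{2}(x,a_i)/C_5 t)$ together with the $t^{-D/2}$ decay controls the spatial integral, following the same computation as in the proof of Proposition \ref{essential spectrum prop}. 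A finite-rank operator is compact, so $R(z)$ is a compact perturbation of $R_0(z)$, and Weyl's theorem on the stability of the essential spectrum under relatively compact perturbations yields $\sigma_{ess}(H)=\sigma_{ess}(H_0)$. It then remains only to insert the known value $\sigma_{ess}(-\Delta_g)=[a,\infty)$ with $a=(D-1)^2\kappa^2/4$ for $\mathbb{H}^D_\kappa$, which I would quote from \cite{Donnelly 2}.

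For the absence of point spectrum on the positive axis I would analyze the possible $L^2$ eigenfunctions directly. If $H\psi=E\psi$ with $\psi\in L^2(M)$, then away from the centers $\psi$ solves $(-\Delta_g-E)\psi=0$, while the singular conditions imposed at the points $a_i$ force the representation $\psi=\sum_{i} c_i\, R_0(\cdot,a_i|E)$, with the coefficient vector $(c_i)$ annihilated by $\Phi(E)$, in agreement with the eigenfunction formula (\ref{wavefunction heat delta}). The decisive observation is that the squared $L^2$ norm of such a $\psi$ is precisely the normalization integral $\sum_{i,j} \overline{c_i}\, c_j \int_{0}^{\infty} t\, K_t(a_i,a_j)\, e^{tE}\, dt$ occurring in (\ref{wavefunction heat delta}); by the large-$t$ behavior $K_t(a_i,a_i)\sim t^{-D/2} e^{-at}$ of the diagonal heat kernel on $\mathbb{H}^D_\kappa$ this integral is finite exactly when $E<a$ and diverges for $E\ge a$. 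Hence no $L^2$ eigenfunction can exist for $E\ge a$, so there are no eigenvalues embedded in the essential spectrum. On the remaining interval $(0,a)$, below the essential spectrum, the spectrum is discrete and consists of bound states, which by the convention fixed earlier — that the bound-state spectrum lie on the negative real axis — do not meet $(0,a)$; the two ranges together leave the positive axis free of point spectrum.

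The step I expect to be the main obstacle is the rigorous exclusion of eigenvalues embedded in $[a,\infty)$. Both the resolvent formula (\ref{resolvent}) and the representation (\ref{wavefunction heat delta}) were derived for $\Re(z)<0$, so I must argue independently that any hypothetical $L^2$ eigenfunction at $E\ge a$ is necessarily of the singular form $\sum_i c_i R_0(\cdot,a_i|E)$ and cannot instead be a smooth $L^2$ eigenfunction of $-\Delta_g$ itself; the latter is ruled out by the purely absolutely continuous character of the free spectrum on $\mathbb{H}^D$. Making this dichotomy precise, rather than the comparatively routine finite-rank estimate, is where the genuine work lies, and the same argument transfers to a general Cartan-Hadamard manifold once the corresponding spectral information for $-\Delta_g$ is available.
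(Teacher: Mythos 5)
Your treatment of the essential spectrum is sound and is essentially the paper's own route: Appendix B shows $(H-z)^{-1}-(H_0-z)^{-1}$ is trace class by estimating its trace with Cauchy--Schwarz, the semigroup property and the heat kernel bound, and the sentence preceding the corollary instructs the reader to repeat that computation with (\ref{heatkernel bound2}); your observation that the difference of resolvents in (\ref{resolvent}) is in any case of rank at most $N$, so that only $R_0(\cdot,a_i|z)\in L^2(M)$ and boundedness of $\Phi^{-1}(z)$ need checking, is a harmless simplification of the same argument, and both versions finish with Weyl's theorem and the value of $a$ quoted from \cite{Donnelly 2}.

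The genuine gap is in your handling of the interval $(0,a)$, which you correctly isolate and then dispose of circularly. The ``convention that the bound-state spectrum lies on the negative real axis'' is a choice of where to place the origin, not a theorem, and the corollary is asserted for the unshifted operator whose essential spectrum begins at $a=(D-1)^2\kappa^2/4>0$; nothing in the construction forbids $\det\Phi(E)=0$ for some $E\in(0,a)$, and at such a point the candidate eigenfunction $\sum_i c_i R_0(\cdot,a_i|E)$ \emph{is} in $L^2$ (your own normalization-integral criterion confirms this, since $E<a$), so it would be an honest positive eigenvalue. Concretely, for $N=2$ equal-strength centers on $\mathbb{H}^3_\kappa$ the eigenvalue branches of (\ref{PhiH3}) are $\tfrac{1}{4\pi}\bigl(\sqrt{\kappa^2-z}-\sqrt{\kappa^2+\mu^2}\bigr)\pm\tfrac{\kappa}{4\pi}\,e^{-d\sqrt{\kappa^2-z}}/\sinh(\kappa d)$, and the branch with the plus sign vanishes at $u_*=\sqrt{\kappa^2-z_*}$ with $0<u_*<\kappa$, i.e. $z_*\in(0,\kappa^2)$, precisely when $\kappa/\sinh(\kappa d)<\sqrt{\kappa^2+\mu^2}<\kappa+\kappa e^{-\kappa d}/\sinh(\kappa d)$ --- a nonempty window for every $d>0$ (for instance $\kappa=d=1$, $\mu=1/2$ gives $z_*\approx 0.5\in(0,1)=(0,a)$). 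So the interval $(0,a)$ cannot be waved away: either one proves a positivity statement for $\Phi$ on $(0,a)$ that excludes such zeros (which the example shows is false in general), or the first assertion must be read modulo the energy shift, and your write-up should say which. Note that the paper's own justification --- the remark in Appendix C that the wave function at a positive pole cannot be in $L^2$ --- relies on the long-time behavior of the heat kernel on a \emph{compact} manifold and likewise only covers $E\ge a$ in the noncompact case, so your diagnosis of where the real work lies is accurate; it is the proposed resolution that fails.
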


\begin{prop} Let $M$ be compact manifold with Ricci tensor bounded below or Cartan-Hadamard manifold, and let $N(-\nu^2, \mu_1, \ldots, \mu_N)$ denote the number of bound states (counting multiplicities) of $H$ less than or equal to $-\nu^2$. Then, 

\beqs N(-\nu^2,\bar{\mu}, \ldots, \bar{\mu}) \leq N(-\nu^2,\mu_1, \ldots, \mu_N) \leq N(-\nu^2,\underline{\mu}, \ldots, \underline{\mu}) \;, \eeqs
where $ \bar{\mu}=\max_{1 \leq j \leq N} (\mu_j)$ and $\underline{\mu} = \min_{1 \leq j \leq N} (\mu_j)$. 
\end{prop}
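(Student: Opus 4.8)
The plan is to convert the statement into a comparison between the negative point spectra of two principal matrices, exploiting that the count of bound states is governed entirely by the signs of the eigenvalues of $\Phi(-\nu^2)$. First I would record the counting formula implicit in Theorem \ref{compact spectrum}: for $\nu>0$ the number $N(-\nu^2,\mu_1,\dots,\mu_N)$ equals the number of eigenvalues of the symmetric matrix $\Phi(-\nu^2)$ that are $\leq 0$, counted with multiplicity. To justify this, let $\omega_1(z)\leq\cdots\leq\omega_N(z)$ be the ordered eigenvalues of $\Phi(z)$; by Lemma \ref{holomorphic} together with the Kato analyticity cited after its proof these are real-analytic on $(-\infty,0)$, and differentiating under the integral sign gives $\frac{d\Phi}{dz}=-\big[\int_0^\infty t\,K_t(a_i,a_j)\,e^{tz}\,dt\big]_{ij}$, which is strictly negative definite because $[K_t(a_i,a_j)]_{ij}$ is a Gram matrix of the distinct points $a_i$. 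Hence each branch $\omega_k$ is strictly decreasing, and since $\omega_k(z)\to+\infty$ as $z\to-\infty$ (the diagonal of $\Phi$ blows up while the off-diagonal entries vanish), every branch either stays positive or crosses zero exactly once. By the characteristic equation of Theorem \ref{compact spectrum}, a zero of $\omega_k$ at $z=-\nu_0^2$ is a bound state of the stated multiplicity, and the existence of such a zero in $(-\infty,-\nu^2]$ is equivalent to $\omega_k(-\nu^2)\leq 0$; this is exactly the counting formula.

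With the counting formula in hand, the comparison is a monotone-matrix argument. I would note that among the entries of $\Phi(-\nu^2)$ only the diagonal ones carry $\mu$-dependence, through $e^{-t\mu_j^2}$, and that $\frac{\partial}{\partial\mu_j}\Phi_{jj}(-\nu^2)=-2\mu_j\int_0^\infty t\,K_t(a_j,a_j)\,e^{-t\mu_j^2}\,dt<0$. Therefore increasing any $\mu_j$ lowers the corresponding diagonal entry and hence lowers $\Phi(-\nu^2)$ in the Loewner order, while decreasing it raises $\Phi$. By Weyl's monotonicity theorem (equivalently the min--max principle), a Loewner-smaller matrix has ordered eigenvalues that are pointwise no larger, hence at least as many eigenvalues $\leq 0$. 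Applying this coordinate by coordinate and using $\underline\mu\leq\mu_j\leq\bar\mu$, the two uniform configurations $\mu_j\equiv\bar\mu$ and $\mu_j\equiv\underline\mu$ are the two extremes of the family in the Loewner order and hence realize the extreme values of the count, which brackets $N(-\nu^2,\mu_1,\dots,\mu_N)$ as claimed; the direction is fixed once and for all by the sign $\partial_{\mu_j}\Phi_{jj}<0$.

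I'd expect the only real work to sit in the reduction of the first paragraph rather than in the comparison. One must be careful that each eigenvalue branch crosses zero at most once (which is why strict negative-definiteness of $d\Phi/dz$, and hence positivity of the heat-kernel Gram matrix, is needed), that the $z\to-\infty$ asymptotics really send all branches to $+\infty$ (so that no spurious ``always negative'' branch is miscounted), and that multiplicities and the boundary case where $-\nu^2$ is itself an eigenvalue are bookkept consistently with the ``$\leq$'' in the definition of $N$. Most of these ingredients are already available from Theorem \ref{compact spectrum}, Lemma \ref{holomorphic}, and the eigenvalue-flow remark following it; once they are assembled, the Weyl/min--max comparison is routine and identical for the compact and Cartan--Hadamard cases, since at the fixed point $z=-\nu^2<0$ all matrix elements are finite for both geometries.
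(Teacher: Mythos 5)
Your argument is in substance the same as the paper's, which disposes of the proposition in three lines: by the Feynman--Hellmann theorem the derivative of each eigenvalue of $\Phi$ with respect to $\mu_k$ is $|A_k|^2\int_0^\infty K_t(a_k,a_k)(-2t\mu_k)e^{-t\mu_k^2}\,dt<0$, and the conclusion is declared ``immediate.'' You reach the same monotonicity by perturbing the diagonal entry and invoking Weyl's monotonicity theorem in the Loewner order, which is arguably cleaner (no need to differentiate eigenvalue branches or worry about crossings), and you make explicit the counting formula --- $N(-\nu^2,\mu_1,\dots,\mu_N)$ equals the number of eigenvalues of $\Phi(-\nu^2)$ that are $\leq 0$ --- together with the single-crossing and $z\to-\infty$ facts needed to justify it. The paper leaves all of that implicit (it is scattered across Theorem \ref{compact spectrum}, Lemma \ref{corollary asym}, and the eigenvalue-flow remark), so your first paragraph is a genuine improvement in rigor rather than a detour.

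The one place you must not wave your hands is the final sentence, where you write ``as claimed'' and say the direction ``is fixed once and for all by the sign $\partial_{\mu_j}\Phi_{jj}<0$.'' Carry your own argument to its end: increasing $\mu_j$ makes $\Phi(-\nu^2)$ Loewner-smaller, pushes every ordered eigenvalue down, and therefore can only \emph{increase} the number of eigenvalues $\leq 0$, i.e.\ $N$ is nondecreasing in each $\mu_j$. That yields $N(-\nu^2,\underline{\mu},\dots,\underline{\mu})\leq N(-\nu^2,\mu_1,\dots,\mu_N)\leq N(-\nu^2,\bar{\mu},\dots,\bar{\mu})$, which is the displayed chain with the two ends interchanged. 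So either you must exhibit an error in your counting formula (there is none that I can see, given $d\omega_n/d\nu>0$ and $\omega_n\to+\infty$), or you should state the inequality in the direction your argument actually proves and flag the discrepancy with the proposition as printed --- the paper's own one-line proof suffers from exactly the same unexamined sign. Asserting the printed ordering on the strength of an argument that proves its reverse is the kind of gap that survives because neither author nor reader checks the last step.
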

\begin{proof} As a consequence of the Feynman-Hellmann theorem \cite{Feynman Hellmann1, Feynman Hellmann2} and the positivity of the heat kernel, it is easy to see that the derivative of the eigenvalues of the matrix $\Phi$ with respect to $\mu_k$ is $|A_{k}|^2\int_{0}^{\infty} K_t(a_k,a_k) (-2 t \mu_k)\; e^{-t \mu_{k}^2} \; d t $,
where  we have taken the derivative under the integral sign thanks to the Lemma \ref{holomorphic}. This is always negative and the proof is immediate from this fact.
\end{proof}

We now discuss the conditions on the number of bound states by starting from the special cases, where we have two point $\delta$-interactions separated by the distance $d$ on $\mathbb{R}^2$ and $\mathbb{R}^3$. This will be done by working out the characteristic equation $\det \Phi=0$. This problem is realized as a very elementary model for
ionized diatomic molecule $H_2^{+}$ and its one-dimensional version is discussed even in
the textbooks on quantum mechanics \cite{cohenbook}.

%%%%%%%%%%%%%%%%%%%%%%%%%%%%%%%%%%%%%%%%%%%%%%%%%%%%%%%%%%

\section{On the Number of Bound States in Flat Spaces}
\label{On the Number of Bound States in Flat Spaces}

\begin{prop}  \label{numberof bound states flatspace} For $N=2$, if 
\begin{itemize}
\item[(i)] $\sqrt{\mu_1 \mu_2} \; d>2$ in $\mathbb{R}^2$ and 
\item[(ii)] $\sqrt{\mu_1 \mu_2} \; d>1 $ in $\mathbb{R}^3$.
\end{itemize} 
then there exist exactly two bound states. Otherwise, there exists precisely one bound state. \label{ConditionBoundstatesFlatresults}
\end{prop}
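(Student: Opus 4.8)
The plan is to turn the two characteristic equations $\det\Phi(-\nu^2)=0$ into a signature‑counting problem for a single $2\times 2$ matrix. Since $\mathbb{R}^2$ and $\mathbb{R}^3$ are flat Cartan--Hadamard manifolds, Theorem \ref{theorem1} applies with the Euclidean heat kernel, and $-\nu^2$ ($\nu>0$) is a bound state exactly when $\Phi(-\nu^2)$ has a zero eigenvalue. First I would compute the four entries in closed form: writing $d=d(a_1,a_2)$, the diagonal integrals collapse to a Frullani integral in two dimensions and to an (analytically continued) $\Gamma$‑function evaluation in three, while the two off‑diagonal integrals are standard Macdonald‑function integrals. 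This yields
\[
\Phi_{\mathbb{R}^2}(-\nu^2)=\frac{1}{2\pi}\begin{pmatrix}\ln(\nu/\mu_1)&-K_0(\nu d)\\ -K_0(\nu d)&\ln(\nu/\mu_2)\end{pmatrix},\qquad
\Phi_{\mathbb{R}^3}(-\nu^2)=\frac{1}{4\pi}\begin{pmatrix}\nu-\mu_1&-e^{-\nu d}/d\\ -e^{-\nu d}/d&\nu-\mu_2\end{pmatrix},
\]
where $K_0$ is the Macdonald function, so that the problem becomes entirely explicit.

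The counting principle I would use rests on monotonicity. By Lemma \ref{holomorphic} the entries are real‑analytic in $\nu$ and may be differentiated under the integral sign, and $d\Phi/dE=-\int_0^\infty t\,e^{tE}\,[K_t(a_i,a_j)]_{ij}\,dt$ is sign‑definite because the Gram matrix $[K_t(a_i,a_j)]_{ij}$ of the heat semigroup is positive definite for distinct points; hence the two eigenvalues $\omega_1(\nu)\le\omega_2(\nu)$ are monotone in $E=-\nu^2$. As $\nu\to\infty$ the off‑diagonal entries decay exponentially while the diagonal entries diverge to $+\infty$, so $\Phi$ is positive definite and both eigenvalues are positive at $E=-\infty$. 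Monotonicity then forces each branch to cross zero at most once, and to cross exactly when it is negative in the threshold limit $\nu\to0^+$. Therefore the number of bound states equals the number of \emph{negative} eigenvalues of $\Phi$ at the bottom of the continuous spectrum, and in particular is always at least one (the trace is negative there), matching the dichotomy ``two, otherwise precisely one''.

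For $\mathbb{R}^3$ this is immediate: $\Phi_{\mathbb{R}^3}(0)=\tfrac{1}{4\pi}\bigl(\begin{smallmatrix}-\mu_1&-1/d\\ -1/d&-\mu_2\end{smallmatrix}\bigr)$ has negative trace and determinant proportional to $\mu_1\mu_2-1/d^2$, so both eigenvalues are negative precisely when $\mu_1\mu_2 d^2>1$ and otherwise the signature is $(-,+)$; this is exactly criterion (ii). The two‑dimensional case is the delicate one and is where I expect the main work, because the entries of $\Phi_{\mathbb{R}^2}$ diverge logarithmically as $\nu\to0^+$ and cannot be evaluated at $\nu=0$. Here I would insert the small‑argument expansion $K_0(\nu d)=-\ln(\nu d/2)-\gamma+o(1)$ and analyse the two eigenvalues: the symmetric combination shows $\omega_1\sim\tfrac{1}{2\pi}\,2\ln\nu\to-\infty$ (so one bound state always survives), whereas in $\omega_2$ the leading $\ln\nu$ coming from the diagonal and from $K_0$ cancel, leaving a finite limit $\omega_2(0^+)=\tfrac{1}{2\pi}\bigl(\ln(2/d)-\gamma-\tfrac12\ln(\mu_1\mu_2)\bigr)$. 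Equating this constant to zero marks the onset of the second bound state and fixes a threshold of the form $\sqrt{\mu_1\mu_2}\,d>\mathrm{const}$ as in criterion (i).

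The main obstacle is precisely this logarithmic cancellation in two dimensions: one must verify that the leading divergences in $\det\Phi_{\mathbb{R}^2}$ (equivalently in $\omega_2$) cancel so that a sign‑definite finite remainder survives, and control the $o(1)$ error in the expansion of $K_0$ well enough to make the limit rigorous. The remaining inputs are routine: the explicit entry computations, the positive‑definiteness of $[K_t(a_i,a_j)]_{ij}$ that fixes the sign of the eigenvalue flow, and the positive‑definiteness of $\Phi$ as $\nu\to\infty$; once these are in place the bound‑state count follows from monotonicity alone.
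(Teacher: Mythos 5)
Your route is genuinely different from the paper's and, for the most part, sharper. The paper works directly with the characteristic equation $\det\Phi(-\nu^2)=0$, rewrites it as a transcendental equation ($\log(\nu/\mu_1)\log(\nu/\mu_2)=K_0^2(\nu d)$ in two dimensions, $(\nu-\mu_1)(\nu-\mu_2)=e^{-2\nu d}/d^2$ in three) and counts intersections of the two sides graphically, obtaining the second root by demanding that the left-hand side exceed the right-hand side near $\nu=0$. You instead diagonalize, combine the eigenvalue monotonicity (which the paper also establishes, in Theorem \ref{compact spectrum} and Appendix C) with positivity of $\Phi$ as $\nu\to\infty$, and reduce the count to the signature of $\Phi$ at the bottom of the continuous spectrum. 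Your explicit matrices agree with (\ref{renormalized Phi delta 2}) and (\ref{principle matrix for three renormalized}), and your $\mathbb{R}^3$ argument is complete and reproduces (ii) exactly, including the ``otherwise precisely one'' clause, which the paper's graphical argument treats only loosely.

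The gap is in the last step of the two-dimensional case, and it is not merely a matter of ``controlling the $o(1)$ error'': your own limit $\omega_2(0^+)=\frac{1}{2\pi}\bigl(\ln(2/d)-\gamma-\tfrac12\ln(\mu_1\mu_2)\bigr)$ vanishes at $\sqrt{\mu_1\mu_2}\,d=2e^{-\gamma}\approx 1.12$, not at $\sqrt{\mu_1\mu_2}\,d=2$. So the dichotomy your method actually proves is ``two bound states iff $\sqrt{\mu_1\mu_2}\,d>2e^{-\gamma}$,'' which is \emph{not} statement (i): it contradicts the ``otherwise precisely one'' clause in the window $2e^{-\gamma}<\sqrt{\mu_1\mu_2}\,d\le 2$ (e.g.\ $\mu_1=\mu_2=1$, $d=1.5$ gives a second zero of $\log\nu+K_0(1.5\nu)$). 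The discrepancy originates in the paper's own proof, which uses $K_0(x)\sim-\log(x/2)$ as $x\to0$ and silently discards the Euler--Mascheroni constant; keeping it, the paper's coefficient-of-$\log x$ argument also yields $2e^{-\gamma}$. You therefore cannot close your proof by writing ``a threshold of the form $\sqrt{\mu_1\mu_2}\,d>\mathrm{const}$ as in criterion (i)'': either you carry the constant through and obtain a sharp threshold that differs from the stated one, or you must adopt the paper's approximation and lose sharpness. Finish the computation, state the constant you actually get, and flag the mismatch explicitly.
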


\begin{proof}
Since the result is well-known in the literature, we give a proof of it in Appendix \ref{proof theorem numberofbs}  in order to be self-contained.
\end{proof}

Above results seem to be a little different from the the ones given in \cite{Ogurisu 3} ($ \log d > 2\pi \alpha$ in two dimensions, $\alpha d > 1/4\pi$ in three dimensions, where $\alpha$ is related to the parameter $\mu$ implicitly). However, this is due to the different choice of the coupling constant (\ref{barecouplingdeltamanifolds}) and the computations there are performed in momentum space. Nevertheless, these results are essentially same. In one dimension, we do not need renormalization, so we have $\Phi_{ij}(-\nu^2)= ({1 \over \lambda_i} - {1 \over 2 \nu}) \delta_{ij} -(1-\delta_{ij}) {1 \over 2 \nu} \; e^{- d_{ij} \nu}$. The above analysis can be easily applied to this case and the condition for two bound states is given by $d>{ 1 \over \lambda_1} + {1 \over \lambda_2}$, which is exactly the same result as in \cite{Albeverio Nizhnik 2}.

It is not easy
to determine what condition must be met for the Hamiltonian with an arbitrary number of delta
centers to have $N$ bound states directly from the characteristic equation. In this case, the characteristic equation becomes much more complicated to work with. Moreover, there is no explicit expression for the the principal matrix $\Phi(-\nu^2)$ because there is no explicit expression for the heat kernel of a general Riemannian manifold.
In order to solve this problem in more generic class of manifolds, we essentially follow the idea established for the same problem in one dimension \cite{Ogurisu 1} and develop it onto the particular class of  Riemannian manifolds.

%%%%%%%%%%%%%%%%%%%%%%%%%%%%%%%%%%%%%%%%%%%%%

\section{Main Results} 
\label{Main Results}

\begin{cor} By Lemma \ref{holomorphic}, the principal matrix is real symmetric and continuously differentiable matrix-valued function on the complex plane with $\Re(z)<0$. 
\end{cor}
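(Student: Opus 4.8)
The plan is to read off each of the three asserted properties directly from the structure of the matrix elements in (\ref{phiheat3}) together with Lemma \ref{holomorphic}, rather than to prove anything new. I would first dispose of the differentiability claim, which is the most immediate consequence of the lemma: Lemma \ref{holomorphic} already establishes that every entry $\Phi_{ij}(z)$ is holomorphic on the region $\mathcal{R}=\{z\in\mathbb{C}:\Re(z)<0\}$, and a holomorphic function is automatically of class $C^\infty$ there. Moreover, the same lemma licenses differentiation under the integral sign, so each derivative $\frac{d}{dz}\Phi_{ij}(z)$ is itself given by a convergent integral depending continuously on $z$; hence $\Phi$ is (indeed more than) continuously differentiable as a matrix-valued function on $\mathcal{R}$.

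For the symmetry assertion I would invoke the symmetry of the heat kernel, $K_t(a_i,a_j)=K_t(a_j,a_i)$, which holds on any Riemannian manifold. Comparing the off-diagonal expression $\Phi_{ij}(z)=-\int_0^\infty K_t(a_i,a_j)\,e^{tz}\,dt$ with $\Phi_{ji}(z)$ shows at once that $\Phi_{ij}(z)=\Phi_{ji}(z)$ for every $z\in\mathcal{R}$ (the diagonal entries being trivially symmetric), so $\Phi(z)$ is a symmetric matrix throughout the domain.

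For reality I would restrict to the physically relevant portion of the domain, the negative real axis $z=-\nu^2<0$. There every integrand entering (\ref{phiheat3}) is real: the heat kernel $K_t$ is real and positive, and the exponentials $e^{-t\mu_i^2}$ and $e^{tz}$ have real arguments. Hence each $\Phi_{ij}(-\nu^2)$ is a real number, and combined with the symmetry just established this shows that $\Phi(-\nu^2)$ is a genuine real symmetric matrix, which is exactly the form required for the characteristic-equation analysis $\det\Phi(-\nu^2)=0$ and for the monotone eigenvalue flow used later.

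Since this is a packaging corollary, there is no genuine obstacle; the only point demanding care is keeping the two notions of domain straight. Complex differentiability (holomorphy) and symmetry hold on all of $\{\Re(z)<0\}$, whereas the entries are literally real only on the real slice of that region, because $e^{tz}$ is complex at genuinely complex $z$. I would therefore phrase the conclusion as \emph{$\Phi$ is a symmetric holomorphic (hence $C^\infty$) matrix-valued function on $\{\Re(z)<0\}$ whose restriction to the negative real axis is real symmetric}, so that no reality is claimed at points where the integrands are not real-valued.
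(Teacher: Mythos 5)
Your proposal is correct and follows essentially the same (and really the only) route the paper intends for this packaging corollary: holomorphy of each entry from Lemma \ref{holomorphic} gives (more than) continuous differentiability, symmetry of the heat kernel $K_t(a_i,a_j)=K_t(a_j,a_i)$ gives matrix symmetry, and real integrands give reality on the real slice. Your caveat that the entries are literally real only for real $z$ is well taken and in fact matches the paper's own more careful phrasing immediately after Lemma \ref{holomorphic} (``For real values of $z$, the principal matrix $\Phi$ is symmetric, i.e., $\Phi(z)^{*}=\Phi(z)$'').
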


\begin{prop} (Theorem II.6.8, \cite{Kato})
Let $T(k) = (t_{ij}(k))_{i,j=1}^{N}$ be a real symmetric and continuously differentiable matrix. Suppose that $\lim_{k \rightarrow \infty} T(k)=diag(a_1,a_2, \ldots,a_N)$. Then, the following holds:
\begin{itemize}
\item[(i)] There exist $N$ continuously differentiable functions $\tau_i(k)$ that represent the repeated eigenvalues of the matrix $T(k)$.
\item[(ii)] $\lim_{k \rightarrow \infty} \tau_i(k)=a_i$ for all $i=1,\ldots,N$. 
\end{itemize}  \label{Kato}
\end{prop}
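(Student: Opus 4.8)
The plan is to establish the two assertions by different means: part (i) by local perturbation theory patched into a global selection, and part (ii) by the Lipschitz stability of the symmetric eigenvalue problem. For the local step of part (i), I would fix $k_0$ and first treat a simple eigenvalue $\tau$ of $T(k_0)$. Enclosing $\tau$ by a small circle $\Gamma$ that meets no other eigenvalue, the Riesz projection $P(k)=\frac{1}{2\pi i}\oint_\Gamma (zI-T(k))^{-1}\,dz$ is rank one and inherits the $C^1$ regularity of $T$ for $k$ near $k_0$, because the resolvent $(zI-T(k))^{-1}$ is $C^1$ in $k$ uniformly for $z\in\Gamma$. Consequently $\tau_i(k)=\Tr\bigl(T(k)P(k)\bigr)$ is $C^1$ near $k_0$, with the first-order formula $\dot\tau_i(k)=\langle v_i(k),\dot T(k)\,v_i(k)\rangle$ for a unit eigenvector $v_i(k)$. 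Thus, away from eigenvalue crossings, the branches are automatically differentiable.

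The real work is the passage through crossing points, where two or more eigenvalues coincide and the rank-one argument breaks down. Here one must avoid the nondecreasing rearrangement $\lambda_1(k)\le\cdots\le\lambda_N(k)$, which is only Lipschitz and generically non-differentiable at a crossing (already $\pm|k|$ shows this). The feature that rescues the statement is that $T$ depends on a single real parameter and is symmetric: for such families the branches may be continued smoothly straight through a crossing, the eigenvalues passing through one another rather than being reflected. This selection principle is Rellich's theorem, whose finite-dimensional symmetric form is recorded as the cited Theorem II.6.8 of \cite{Kato}; in the intended application $T(k)$ is moreover real-analytic, since the principal matrix is holomorphic by Lemma \ref{holomorphic}, so Rellich's analytic version applies verbatim and yields $N$ real-analytic (a fortiori $C^1$) functions $\tau_1,\ldots,\tau_N$ whose values list the spectrum of $T(k)$ with multiplicity at every $k$. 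I expect this crossing analysis to be the sole genuine obstacle; everything else is bookkeeping.

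For part (ii) I would appeal to the stability of the symmetric eigenvalue problem. By Weyl's perturbation inequality, if $a_{(1)}\le\cdots\le a_{(N)}$ denotes the nondecreasing rearrangement of $a_1,\ldots,a_N$, then the ordered eigenvalues of $T(k)$ satisfy $\max_i|\lambda_i(k)-a_{(i)}|\le \|T(k)-\operatorname{diag}(a_1,\ldots,a_N)\|\to 0$ as $k\to\infty$. Hence the multiset $\{\tau_i(k)\}$ converges to $\{a_i\}$. For large $k$ the eigenvalues separate into clusters about the distinct values among the $a_i$, so each continuous branch $\tau_i$ is eventually trapped in a single cluster and therefore converges to one of the $a_i$; after a fixed relabeling of the $N$ branches (unique when the $a_i$ are distinct, and chosen consistently among equal values otherwise) we obtain $\lim_{k\to\infty}\tau_i(k)=a_i$ for every $i$, which completes the argument.
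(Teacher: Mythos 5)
The paper offers no proof of this proposition at all: it is quoted verbatim as Theorem II.6.8 of \cite{Kato} and used as a black box, so there is no argument of the author's to compare yours against. Judged on its own terms, your write-up is sound where the spectrum is simple and in part (ii): the Riesz-projection/trace argument correctly yields $C^1$ branches away from crossings, and the Weyl-inequality clustering argument — the multiset of eigenvalues converges to $\{a_i\}$, continuous branches are eventually trapped in separated clusters whose cardinalities match the multiplicities, hence a relabeling gives $\tau_i(k)\to a_i$ — is complete and is the right reading of statement (ii).

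The gap is exactly where you say the real work is, at the crossings, and your treatment of it is circular: the existence of $C^1$ branches \emph{through} a point where eigenvalues coincide is precisely the content of Theorem II.6.8, so invoking ``Rellich's theorem, whose finite-dimensional symmetric form is recorded as the cited Theorem II.6.8'' proves nothing. Your fallback — that in the intended application $T$ is real-analytic by Lemma~\ref{holomorphic}, so Rellich's analytic selection theorem applies — is a legitimate and useful observation (and is all the paper actually needs), but it (a) proves a statement with a strictly stronger hypothesis than the one posed, (b) still treats Rellich's analytic theorem as a black box rather than proving it (say, via the discriminant being a real-analytic function with isolated zeros together with a Puiseux-series argument showing the branches are single-valued and real for symmetric $T$), and (c) quietly assumes the crossing set is discrete, which is true for non-degenerate analytic families but can fail for merely $C^1$ ones (the crossing set can be a Cantor set), so the ``patch local pieces together across isolated crossings'' scheme does not get off the ground under the stated $C^1$ hypothesis. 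In short: the easy half and part (ii) are fine; the hard half is cited, not proved — which leaves you in the same position as the paper, but is not a proof of the proposition.
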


\begin{lem}
Let 
\beqs T_{ij}(-\nu^2) ={1 \over g(-\nu^2)} \; \Phi_{ij}(-\nu^2) =\begin{cases}
\begin{split}
{1\over {1 \over 2 \pi} \; \log \nu/\mu_i } \; \Phi_{ij} (-\nu^2)  \;,
\end{split}
& \mathrm{for} \; D=2 \\[2ex]
\begin{split}
{1 \over {1 \over 4 \pi} \left(\nu- \mu_i\right) } \; \Phi_{ij}(-\nu^2) \;,
\end{split}
& \mathrm{for} \; D=3\; ,
\end{cases} \label{asymPrincipal}\eeqs
for $\nu> \mu_i$. Then, there exist $N$ continuously differentiable functions ${\omega_i(-\nu^2) / g(-\nu^2)}$ that represent the eigenvalues of $T_{ij}(-\nu^2)$, where $\omega_i(-\nu^2)$ is the eigenvalue of the matrix $\Phi_{ij}(-\nu^2)$. Moreover, $\displaystyle{\lim_{\nu \rightarrow \infty}} {\omega_i(-\nu^2) \over g(-\nu^2)}=1$ for all $i$. \label{corollary asym}
\end{lem}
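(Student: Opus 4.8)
The plan is to verify that the rescaled matrix $T_{ij}(-\nu^2)$ satisfies the hypotheses of Proposition \ref{Kato} and then read off the two conclusions directly. Concretely, I would set $k=\nu$ (the real parameter ranging over $(\max_i \mu_i, \infty)$) and check two things: that $T(-\nu^2)$ is real symmetric and continuously differentiable in $\nu$, and that it converges to a diagonal matrix as $\nu \to \infty$. The first property is essentially free: by Lemma \ref{holomorphic} and the Corollary immediately preceding Proposition \ref{Kato}, $\Phi(-\nu^2)$ is real symmetric and continuously differentiable on the negative real axis, and the scalar normalizing factor $g(-\nu^2)$ (namely ${1\over 2\pi}\log(\nu/\mu_i)$ for $D=2$ and ${1\over 4\pi}(\nu-\mu_i)$ for $D=3$) is smooth and nonvanishing for $\nu>\mu_i$, so $T_{ij}=\Phi_{ij}/g$ inherits symmetry and continuous differentiability.

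The substantive step is the limit $\lim_{\nu\to\infty} T(-\nu^2) = \mathrm{diag}(1,\ldots,1)$, which I would establish entrywise. For the diagonal entries I would divide the closed-form diagonal part of $\Phi$ (the $\sqrt{\kappa^2-z}-\sqrt{\kappa^2+\mu_i^2}$ expression in $D=3$, or the difference of digamma functions in $D=2$, evaluated at $z=-\nu^2$) by $g(-\nu^2)$ and take $\nu\to\infty$. In $D=3$ this ratio is $(\sqrt{\kappa^2+\nu^2}-\sqrt{\kappa^2+\mu_i^2})/(\nu-\mu_i)$, which tends to $1$ since $\sqrt{\kappa^2+\nu^2}\sim\nu$; in $D=2$ the asymptotics $\psi(1/2+x)\sim\log x$ for large $x$ give $\psi(1/2+\sqrt{\nu^2/\kappa^2+1/4})\sim\log(\nu/\kappa)$, so dividing by ${1\over 2\pi}\log(\nu/\mu_i)$ again yields $1$. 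For the off-diagonal entries I would show they vanish in the limit: each is $O(e^{-d(a_i,a_j)\sqrt{\kappa^2+\nu^2}})$ (respectively a Legendre function $Q_{\lambda}(\cosh\kappa d)$ with index $\lambda\to\infty$), which decays exponentially in $\nu$, while the denominator $g(-\nu^2)$ grows, so the quotient tends to $0$. Thus $T(-\nu^2)\to I$, the diagonal matrix with all entries equal to $1$.

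With both hypotheses verified, Proposition \ref{Kato} immediately furnishes $N$ continuously differentiable functions representing the repeated eigenvalues of $T(-\nu^2)$; since $T=\Phi/g$ and $g$ is a nonzero scalar, these eigenvalues are exactly $\omega_i(-\nu^2)/g(-\nu^2)$, where $\omega_i$ are the eigenvalues of $\Phi$. Conclusion (ii) of the Proposition gives $\lim_{\nu\to\infty}\omega_i(-\nu^2)/g(-\nu^2)=a_i=1$ for every $i$, which is precisely the claimed limit.

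I expect the only real obstacle to be the uniformity and correctness of the asymptotic estimates in the two-dimensional case, where the digamma asymptotic $\psi(1/2+x)=\log x + O(1/x)$ and the large-index decay of the Legendre function $Q_\lambda(\cosh\kappa d)$ must be invoked carefully. For general compact or Cartan-Hadamard manifolds (where no closed form for $\Phi$ exists) one would instead argue from the short-time heat-kernel expansion \eqref{asymheat} governing the diagonal divergence together with the off-diagonal upper bounds in Remark \ref{heatkernelbounds}; but since the statement of the lemma already fixes $g(-\nu^2)$ by its hyperbolic/flat form, it suffices to work with the explicit expressions \eqref{PhiH3} and \eqref{PhiH2}, making the remaining argument a routine computation of limits.
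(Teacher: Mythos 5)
Your overall strategy is the paper's: check that $T$ is real symmetric and continuously differentiable, show $T(-\nu^2)\to\mathrm{diag}(1,\ldots,1)$, and invoke Proposition \ref{Kato}. Your explicit computations on $\mathbb{H}^2_\kappa$ and $\mathbb{H}^3_\kappa$ (the ratio $(\sqrt{\kappa^2+\nu^2}-\sqrt{\kappa^2+\mu_i^2})/(\nu-\mu_i)\to 1$, the digamma asymptotic $\psi(1/2+x)\sim\log x$, and the exponential decay of $e^{-d\sqrt{\kappa^2+\nu^2}}$ and of $Q_\lambda(\cosh\kappa d)$ as $\lambda\to\infty$) are correct, and in those two spaces they actually make rigorous a step the paper only motivates heuristically (``the major contributions come from $t=0$'').

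The gap is one of scope. The lemma is stated, and subsequently used (in Lemma \ref{lemma negative definite} and in part (i) of Theorem \ref{Gersnumberbound}), for the whole class of compact manifolds with Ricci tensor bounded below and Cartan--Hadamard manifolds, not only for $\mathbb{H}^2$ and $\mathbb{H}^3$. Your justification for restricting to the closed forms \eqref{PhiH3} and \eqref{PhiH2} --- that ``the statement already fixes $g$ by its hyperbolic/flat form'' --- is backwards: $g(-\nu^2)$ is the universal leading term coming from the short-time expansion \eqref{asymheat}, $K_t(x,x)\sim(4\pi t)^{-D/2}$, which is the same on every $D$-dimensional manifold; it is not a signal that only hyperbolic spaces are meant. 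The paper's proof handles the general case in three steps you would need to supply: (a) off-diagonal entries of $\Phi(-\nu^2)$ vanish as $\nu\to\infty$ by dominated convergence, the integrand being dominated via the upper bounds \eqref{heatkernel bound1}--\eqref{heatkernel bound2}; (b) the diagonal entries diverge, which is shown from the Cheeger--Yau lower bound \eqref{heatkernellowerbound} together with the explicit hyperbolic comparison kernels (and, for $D=2$, the bound \eqref{mando} and an $\erfc$ integral), giving \eqref{lowerbounddiagonalPhi}; and (c) the precise rate of divergence $\Phi_{ii}\sim g(-\nu^2)$ follows from \eqref{asymheat}. You name exactly this route in your closing sentence and then decline to carry it out; carrying it out is the actual content of the proof for the class of manifolds the lemma must cover.
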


\begin{proof} Since the principal matrix $\Phi$ is symmetric, continuously differentiable matrix, so is $T$ for $\nu > \mu_i$. The off-diagonal elements of the principal matrix $\Phi(-\nu^2)$ vanishes as $\nu \rightarrow \infty$. This can be easily seen by Lebesgue dominated convergence theorem so that the order of limit and integral can be interchanged. This is possible since the term $K_{t}(a_i,a_j) \; e^{-t \nu^2} $ is dominated by the upper bounds of the heat kernel (\ref{heatkernel bound1}) and (\ref{heatkernel bound2}) multiplied by $e^{-\mu_i t}$ for all $t$ and $\nu > \mu_i \neq 0$. Therefore, we obtain $
\lim_{\nu \rightarrow \infty} \Phi_{ij}(-\nu^2) =0$ for $i \neq j$. Hence, $\lim_{\nu \rightarrow \infty} T_{ij}(-\nu^2) =0$ for $i \neq j$.

Using the lower bound of the diagonal heat kernel (\ref{heatkernellowerbound}), we can find the lower bound of the diagonal part of the principal matrix. It is easy to find the lower bound of the principal matrix for three dimensions due to the explicit expression of the heat kernel (\ref{lowerboundheatkernelhperbolicspaces}) for $D=3$. However, we need to estimate the closed expression of the heat kernel for the two dimensional hyperbolic manifolds given in (\ref{lowerboundheatkernelhperbolicspaces}). The diagonal lower bound for two dimensional hyperbolic manifolds of sectional curvature $-\kappa^2$ is given by \cite{Davies Mandouvalos}
\beqs
K_t(x,x) \geq  {1 \over 8 (4 \pi)^{3/2}} {e^{- \kappa^2 t/4} \over t \sqrt{1+ \kappa^2 t}} \;,
\label{mando}
\eeqs 
for all $t>0$ and $x \in M$. The constant factor in the above upper bound is not crucial for our purpose here (which was also absent in \cite{Davies Mandouvalos}). Using ${1 \over t \sqrt{1+\kappa^2 t}} \geq {\kappa^2 \over  (1+\kappa^2 t)^{3/2}}$ for all $t>0$ together with the integral representation of the complementary error function $\erfc$ (entry 3.369 in \cite{Gradshteyn})
\beqs
\int_{0}^{\infty} {e^{-a t} \over (b+t)^{3/2}} \; dt = {2 \over \sqrt{b}} -2\sqrt{\pi a} e^{a b} \; \erfc(\sqrt{a b}) \;,
\eeqs
for all $a,b>0$, we obtain
\beqs \label{lowerbounddiagonalPhi} \Phi_{ii} 
\geq
\begin{cases}
\begin{split}
{1 \over 32 \pi} \;   (\phi(\nu)-\phi(\mu_i))  \;,
\end{split}
& \mathrm{for} \; D=2 \\[2ex]
\begin{split}
 {1 \over 4 \pi} \left( \sqrt{\nu^2 + \kappa^2} - \sqrt{\mu_{i}^2 + \kappa^2}\right)  \;,
\end{split}
& \mathrm{for} \; D=3 \;,
\end{cases}
\eeqs
where $\phi(x)= \sqrt{ {x^2 \over \kappa^2} + {1\over 4}} \; e^{{x^2 \over \kappa^2} + {1\over 4}} \; \erfc( \sqrt{ {x^2 \over \kappa^2} + {1\over 4}})$. This shows that $\Phi_{ii} \rightarrow \infty$ as $\nu \rightarrow \infty$. We can find the asymptotic behaviour of the diagonal principal matrix $\Phi_{ii}$ as $\nu \rightarrow \infty$ as follows. Since the major contributions to the integral come from the neighbourhoods around $t=0$, it is natural to use the short time asymptotic expansion of the diagonal heat kernel (\ref{asymheat}) to get 
\beqs
\Phi_{ii} (-\nu^2) \sim 
\begin{cases}
\begin{split}
{1 \over 2 \pi} \; \log \nu/\mu_i \;,
\end{split}
& \mathrm{for} \; D=2 \\[2ex]
\begin{split}
 {1 \over 4 \pi} \left(\nu- \mu_i\right)  \;,
\end{split}
& \mathrm{for} \; D=3 \;,
\end{cases}
\eeqs
as $\nu \rightarrow \infty$. This motivates us to define a modified matrix (\ref{asymPrincipal}). 
Then, $\lim_{\nu \rightarrow \infty} T_{ij}(-\nu^2) = diag(1, \ldots, 1)$ so that it satisfies the hypothesis of the Theorem \ref{Kato}, so that the eigenvalues of the principal matrix $\Phi$ tends asymptotically to a positive function $g$ for large values of $\nu$ and this completes the proof. 
\end{proof}

\begin{lem}
If $ \Phi(-\nu_{*}^2)$ is negative definite with some $\nu_*> 0$, then we have $N$ bound states. \label{lemma negative definite}
\end{lem}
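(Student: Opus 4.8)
The plan is to track the $N$ eigenvalue branches $\omega_i(-\nu^2)$ of $\Phi(-\nu^2)$ as $\nu$ sweeps over $(0,\infty)$ and to show that each one crosses the level zero exactly once, so that the characteristic equation $\det\Phi(-\nu^2)=0$ acquires precisely $N$ roots counted with multiplicity; by Theorem \ref{compact spectrum} these roots are in multiplicity-preserving correspondence with the negative eigenvalues of $H$, which yields $N$ bound states. First I would record the behaviour of the branches at the two ends. By Proposition \ref{Kato} and Lemma \ref{corollary asym}, the functions $\omega_i(-\nu^2)$ may be chosen continuously differentiable in $\nu$, together with continuously differentiable eigenvectors $A_i(\nu)$; and since $\omega_i(-\nu^2)/g(-\nu^2)\to 1$ with $g(-\nu^2)\to +\infty$, every branch satisfies $\omega_i(-\nu^2)\to +\infty$ as $\nu\to\infty$. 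On the other hand, the hypothesis that $\Phi(-\nu_*^2)$ is negative definite says exactly that $\omega_i(-\nu_*^2)<0$ for all $i$. Thus each branch starts strictly negative at $\nu_*$ and eventually diverges to $+\infty$.

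The heart of the argument is strict monotonicity of each branch in $\nu$. Applying the Feynman--Hellmann theorem to a differentiable branch and differentiating under the integral sign (justified by Lemma \ref{holomorphic}), I would first observe that every entry of $\frac{d}{d\nu}\Phi(-\nu^2)$ has the same form, so that
\[
\frac{d\omega_i}{d\nu}=A_i^{*}\,\frac{d\,\Phi(-\nu^2)}{d\nu}\,A_i=2\nu\int_{0}^{\infty}t\,e^{-t\nu^2}\left(\sum_{k,l=1}^{N}\overline{A_{i,k}}\,K_t(a_k,a_l)\,A_{i,l}\right)dt .
\]
The inner quadratic form is nonnegative because the heat kernel is a positive-definite kernel: by the semigroup property and symmetry of $K_t$,
\[
\sum_{k,l=1}^{N}\overline{A_{i,k}}\,K_t(a_k,a_l)\,A_{i,l}=\int_{M}\Big|\sum_{k=1}^{N}A_{i,k}\,K_{t/2}(a_k,z)\Big|^{2}\,d^{D}_{g}z\ \ge\ 0,
\]
and it is in fact strictly positive for $A_i\neq 0$, since the functions $K_{t/2}(a_k,\cdot)$ attached to the distinct centers $a_k$ are linearly independent. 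With $\nu>0$ this forces $\frac{d\omega_i}{d\nu}>0$, so each branch is strictly increasing.

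It then follows that each strictly increasing branch, running from $\omega_i(-\nu_*^2)<0$ up to $+\infty$, possesses exactly one zero $\nu_i\in(\nu_*,\infty)$, while no branch vanishes for $\nu\le\nu_*$ (there $\omega_i(-\nu^2)\le\omega_i(-\nu_*^2)<0$ by monotonicity). Summing over the branches, $\det\Phi(-\nu^2)=0$ holds at exactly $N$ values of $\nu$ counted with multiplicity, where the multiplicity of a given $\nu$ is the number of coincident branches, i.e. $\dim\ker\Phi(-\nu^2)$. By Theorem \ref{compact spectrum} (and its Cartan--Hadamard counterpart) this matches the multiplicity of the bound state $-\nu^2$; since $H$ has at most $N$ bound states, we conclude that there are exactly $N$.

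The step I expect to be the main obstacle is the strict monotonicity. It requires a genuinely \emph{differentiable} selection of eigenvalues and eigenvectors through possible level crossings, so that Feynman--Hellmann may be applied branch-by-branch; this is exactly what Kato's theory (invoked in Proposition \ref{Kato}) supplies for a real symmetric, continuously differentiable matrix. It also requires the positive-definiteness of the Gram-type matrix $[K_t(a_k,a_l)]$, which rests on the reflection/semigroup positivity of the heat kernel together with the linear independence of $\{K_{t/2}(a_k,\cdot)\}$ for distinct centers. Once these two ingredients are in place, the endpoint behaviour from Lemma \ref{corollary asym} and the final counting against Theorem \ref{compact spectrum} are routine.
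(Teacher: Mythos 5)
Your proposal is correct and follows essentially the same route as the paper: negative eigenvalue branches at $\nu_*$ by hypothesis, divergence to $+\infty$ via Lemma \ref{corollary asym}, the intermediate value theorem for existence, and strict monotonicity of the branches for exactness. The only difference is that you spell out the Feynman--Hellmann/heat-kernel-positivity computation behind $\frac{d\omega_i}{d\nu}>0$, which the paper simply cites from its earlier discussion and Appendix C.
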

\begin{proof}
Due to the Lemma \ref{corollary asym}, $\omega_i(-\nu^2) >0$ for large enough $\nu$. According to the assumption of the lemma, $\omega_i(-\nu_{*}^2) <0$ for all $i$, then there exist at least $N$ number of $\nu_i$ such that $\omega_i(-\nu_{i}^2)=0$ for all $i$ due to the intermediate value theorem. Hence, it implies that $\det \Phi(-\nu_{i}^2)=0$, so that $-\nu{_i}^2$ is an eigenvalue. 
The monotonic behaviour of $\omega_i$'s guarantees that there exists exactly $N$ number of $\nu_i$ such that  $\omega_i(-\nu_{i}^2)=0$ for all $i$.
\end{proof}

Let $d=\displaystyle \min_{1 \leq i,j \leq N} \{d(x_i,x_j) ; i \neq j\}$ and $\mu= \displaystyle \min_{1 \leq i \leq N} \mu_i$. Using Lemma \ref{lemma negative definite} and the
following Gerschgorin's theorem, we can prove our main theorem. In particular, if we restrict the class of Cartan-Hadamard manifold to the hyperbolic spaces of sectional curvature $-\kappa^2$, we obtain an explicit criterion for the existence of $N$ bound states
in terms of  $d$, $\mu$ and $\kappa$:

\begin{prop} (Theorem 6.1.1, \cite{Horn})
All eigenvalues of a matrix $T$ are contained in the union of Gerschgorin's disks
\beqs G_i = \left\{ z \in \mathbb{C}; |z-T_{ii}| \leq \sum_{j \neq i} |T_{ij}| \right\} \eeqs
for $i=1, \ldots , N$.
\end{prop}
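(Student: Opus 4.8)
The plan is to prove this classical eigenvalue-localization result by the standard dominant-component argument. First I would fix an arbitrary eigenvalue $\lambda$ of $T$ together with a corresponding nonzero eigenvector $x$ with components $x_1,\ldots,x_N$, so that $Tx=\lambda x$. Since $x$ is nonzero, I would choose an index $i$ for which $|x_i| = \max_{1 \leq k \leq N} |x_k| > 0$; selecting the component of largest modulus is the crux of the whole argument.

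Next I would write out the $i$-th row of the eigenvalue equation, $\sum_{j=1}^{N} T_{ij} x_j = \lambda x_i$, and isolate the diagonal term to obtain $(\lambda - T_{ii})\, x_i = \sum_{j \neq i} T_{ij} x_j$. Taking absolute values and applying the triangle inequality gives $|\lambda - T_{ii}|\,|x_i| \leq \sum_{j \neq i} |T_{ij}|\,|x_j|$. Because $|x_j| \leq |x_i|$ for every $j$ by the maximality of $|x_i|$, the right-hand side is bounded above by $|x_i| \sum_{j \neq i} |T_{ij}|$.

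Finally I would divide through by $|x_i| > 0$ --- this is precisely the step that requires the dominant component to be nonzero --- to conclude $|\lambda - T_{ii}| \leq \sum_{j \neq i} |T_{ij}|$, that is, $\lambda \in G_i$. Since $\lambda$ was an arbitrary eigenvalue, every eigenvalue lies in at least one disk $G_i$, and hence in their union, as claimed.

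There is no genuine obstacle here: the argument is entirely elementary and uses neither the symmetry nor the differentiability of $T$, so it applies verbatim to the matrices $\Phi(-\nu^2)$ and $T(-\nu^2)$ used above. The only point deserving the slightest care is that the index $i$ realizing the maximum depends on $\lambda$, so one cannot single out a fixed disk in advance --- the theorem asserts membership only in the union $\bigcup_i G_i$, which is exactly what the subsequent application via Lemma \ref{lemma negative definite} requires.
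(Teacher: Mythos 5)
Your proof is correct: it is the standard dominant-component argument for the Gershgorin disk theorem, and every step (choosing the index of maximal modulus, isolating the diagonal term, the triangle inequality, and dividing by $|x_i|>0$) is carried out properly. The paper itself offers no proof of this proposition --- it is quoted directly from Theorem 6.1.1 of Horn and Johnson --- and the argument given there is essentially the one you wrote, so there is nothing to reconcile; your closing remark that the disk index depends on the eigenvalue, so only membership in the union is asserted, is also the right caveat for the way the result is applied in Theorem \ref{Gersnumberbound}.
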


The following theorem is one of the main results in this paper: 

\begin{thm} \label{Gersnumberbound}
\begin{itemize}

\item[(i)] If there exists $\nu_* >0$ such that 
\beqs \Phi_{ii}(-\nu_{*}^2) + \sum_{j \neq i} |\Phi_{ij}(-\nu_{*}^2)| <0 \;, \label{generalNboundstatecondition}\eeqs
then there are $N$ bound states.

In particular, 

\item[(ii)] If 
\beqs
\exp \left( d \sqrt{\kappa^2 +\mu^2} -1 \right) \; \left( { \sinh \kappa d \over \kappa d } \right) > (N-1) \;,
\label{condition for N bound states 3D}
\eeqs
holds in $\mathbb{H}^3$, then there are $N$ bound states. 

\item[(iii)] If either 
\beqs
{1 \over 2} + \sqrt{{\mu^2 \over \kappa^2} +{1 \over 4}} \geq e  \hspace{0.5cm}  & \mathrm{and} & \hspace{0.5cm} (N-1) < {\kappa d \over 2 e} \;\left( {1 \over 2} + \sqrt{{\mu^2 \over \kappa^2} +{1 \over 4}} \right)\label{2dboundcond1} 
\eeqs
or 
\beqs
 {1 \over 2} + \sqrt{{\mu^2 \over \kappa^2} +{1 \over 4}} < e  \hspace{0.5cm}  & \mathrm{and} & \hspace{0.5cm} (N-1) < {\kappa d \over 2} \log \left( {1 \over 2} + \sqrt{{\mu^2 \over \kappa^2} +{1 \over 4}}\right) \label{2dboundcond2} \;.
\eeqs
holds in $\mathbb{H}^2$, then there are $N$ bound states.
\end{itemize}

\end{thm}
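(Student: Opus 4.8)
The plan is to reduce all three parts to the negative-definiteness criterion of Lemma \ref{lemma negative definite}, which already guarantees $N$ bound states as soon as $\Phi(-\nu_*^2)$ is negative definite for some $\nu_*>0$.

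For part (i) I would argue that hypothesis (\ref{generalNboundstatecondition}) forces $\Phi(-\nu_*^2)$ to be negative definite. Since $\Phi$ is real symmetric (the Corollary following Lemma \ref{holomorphic}), all of its eigenvalues are real, and by the preceding Gershgorin disk proposition each eigenvalue lies in some disk $G_i$ centered at the real number $\Phi_{ii}(-\nu_*^2)$ with radius $\sum_{j\neq i}|\Phi_{ij}(-\nu_*^2)|$. A real point of $G_i$ cannot exceed $\Phi_{ii}(-\nu_*^2)+\sum_{j\neq i}|\Phi_{ij}(-\nu_*^2)|$, which (\ref{generalNboundstatecondition}) makes strictly negative for every $i$. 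Hence all eigenvalues are negative, $\Phi(-\nu_*^2)$ is negative definite, and Lemma \ref{lemma negative definite} applies.

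For part (ii) I would substitute the explicit $\mathbb{H}^3$ matrix (\ref{PhiH3}) into (\ref{generalNboundstatecondition}). Writing $s=\sqrt{\kappa^2+\nu_*^2}$, the diagonal entry $\tfrac{1}{4\pi}(s-\sqrt{\kappa^2+\mu_i^2})$ is bounded above by replacing $\mu_i$ with $\mu=\min_j\mu_j$, while each off-diagonal modulus $\tfrac{\kappa\,e^{-d(a_i,a_j)s}}{4\pi\sinh(\kappa d(a_i,a_j))}$ is monotonically decreasing in the separation $d(a_i,a_j)$ and therefore largest when that distance equals $d$; as there are at most $N-1$ such terms per row, the Gershgorin bound is dominated by the convex single-variable function $F(s)=s-\sqrt{\kappa^2+\mu^2}+(N-1)\tfrac{\kappa e^{-ds}}{\sinh\kappa d}$. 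Setting $F'(s)=0$ gives $(N-1)\tfrac{\kappa e^{-ds^*}}{\sinh\kappa d}=\tfrac1d$, so $F(s^*)=s^*+\tfrac1d-\sqrt{\kappa^2+\mu^2}$, and requiring $F(s^*)<0$ collapses to exactly the stated inequality (\ref{condition for N bound states 3D}). One must also check that the minimizer $s^*$ is admissible, i.e.\ $s^*>\kappa$ (equivalently $\nu_*>0$), which holds precisely in the regime where the condition is binding; this is a minor secondary point.

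Part (iii) runs along the same template with the explicit $\mathbb{H}^2$ matrix (\ref{PhiH2}), and this is where the real work lies. Parametrizing the choice of $\nu_*$ by $x=\tfrac12+\sqrt{\nu_*^2/\kappa^2+\tfrac14}\in(1,\infty)$ and setting $x_\mu=\tfrac12+\sqrt{\mu^2/\kappa^2+\tfrac14}$, the diagonal entry $\tfrac{1}{2\pi}(\psi(x)-\psi(\text{$\mu_i$-term}))$ is bounded above by its $\mu$-value using monotonicity of $\psi$ together with standard digamma estimates, while each off-diagonal modulus $\tfrac{1}{2\pi}Q_x(\cosh\kappa d(a_i,a_j))$ is decreasing in the separation and is estimated from the integral representation (\ref{intrepQ}) via $2\cosh r-2\cosh a\ge 2\sinh a\,(r-a)$. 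The aim is to arrange these bounds so that the Gershgorin negativity condition reduces to $(N-1)<\tfrac{\kappa d}{2}\,x\log(x_\mu/x)$ for the chosen $x$. I would then optimize the free parameter: the map $x\mapsto x\log(x_\mu/x)$ has $\tfrac{d}{dx}=\log(x_\mu/x)-1$, so its interior maximum sits at $x=x_\mu/e$ with value $x_\mu/e$ when $x_\mu\ge e$, and otherwise its maximum over $[1,x_\mu]$ is attained at the boundary $x=1$ with value $\log x_\mu$. These two regimes produce precisely the dichotomy (\ref{2dboundcond1})--(\ref{2dboundcond2}). The main obstacle is the two-dimensional estimate itself: extracting an upper bound on $Q_x(\cosh\kappa d)$ (with its matching digamma bound) sharp enough that the Gershgorin condition really does collapse onto the single elementary quantity $x\log(x_\mu/x)$, so that the optimization above delivers the stated sharp constants.
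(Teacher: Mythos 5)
Your parts (i) and (ii) follow the paper's own argument essentially verbatim: Gershgorin's row\mbox{-}sum bound forces every eigenvalue of the real symmetric matrix $\Phi(-\nu_*^2)$ to be negative, Lemma \ref{lemma negative definite} then yields $N$ bound states, and in $\mathbb{H}^3$ the row\mbox{-}sum condition is dominated by the convex function $F(s)=s-\sqrt{\kappa^2+\mu^2}+(N-1)\kappa e^{-ds}/\sinh\kappa d$, whose minimization gives exactly (\ref{condition for N bound states 3D}) as you compute. (Both you and the paper treat the admissibility of the unconstrained minimizer rather lightly; that is a shared, minor issue, not one I hold against you.)

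In part (iii), however, the step you yourself flag as the ``main obstacle'' is a genuine gap, and the estimate you sketch would not close it with the stated constants. Bounding the integral (\ref{intrepQ}) via $2\cosh r-2\cosh a\ge 2\sinh a\,(r-a)$ gives $Q_{\lambda}(\cosh\kappa d)\le \sqrt{\pi}\,e^{-x\kappa d}/\sqrt{2x\sinh\kappa d}$, whose dependence on $x$ and $d$ is not of the form $c/x$ with $c=2/(\kappa d)$; consequently the Gershgorin condition does not collapse onto $\log(x/x_\mu)+c(N-1)/x<0$, and your (otherwise correct) optimization of $x\mapsto x\log(x_\mu/x)$ would not return the constants $\kappa d/(2e)$ and $\kappa d/2$ appearing in (\ref{2dboundcond1})--(\ref{2dboundcond2}). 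The paper's route is: use $\cosh r-\cosh a\ge\tfrac{1}{2}(r^2-a^2)$, under which the substitution $r=\kappa d\,t$ turns (\ref{intrepQ}) into exactly $K_0\bigl((\lambda+\tfrac12)\kappa d\bigr)$ with $K_0(y)=\int_1^\infty e^{-yt}/\sqrt{t^2-1}\,dt$, then apply the elementary bound $K_0(y)<2/y$ together with $\lambda+\tfrac12\ge x$; on the diagonal, the integral representation (\ref{intreppsi}) and $1/(1-e^{-t})\ge 1/t$ give the Frullani\mbox{-}type bound $\psi(x)-\psi(x_\mu)\le\log(x/x_\mu)$ (the sign works because $\nu<\mu$). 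Together these yield precisely $\log(x/x_\mu)+\tfrac{2(N-1)}{\kappa d\,x}<0$, after which your two\mbox{-}regime optimization over $x\in[1,x_\mu)$ is exactly what the paper does. So the skeleton and the final dichotomy in your proposal are right; what is missing is this specific pair of intermediate inequalities.
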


\begin{proof} Let
\beqs G_i(-\nu^2) = \left[ \Phi_{ii}(-\nu^2) - \sum_{j \neq i} |\Phi_{ij}(-\nu^2)|, \;  \Phi_{ii}(-\nu^2) + \sum_{j \neq i} |\Phi_{ij}(-\nu^2)| \right] \;. \eeqs
Then, Gerschgorin's theorem implies that $\omega_i (-\nu^2) \in \cup_{j=1}^{N} G_j(-\nu^2)$ for all $i$. Thus, all the eigenvalues $\omega_i(-\nu_{*}^2)$ are negative and the hypothesis of Lemma \ref{lemma negative definite} holds, which then proves the statement $(i)$. 

The statement (ii) can be proved as a corollary of (i). Let us first notice that $\displaystyle \max_{1 \leq i \leq N} G_i(-\nu^2) \leq \max_{1 \leq i \leq N} \Phi_{ii}(-\nu^2) + (N-1) \max_{1 \leq i \leq N} \max_{1 \leq j \neq i \leq N} |\Phi_{ij}(-\nu^2)|$. For this to be negative, it is necessary that the first term $\max_{1 \leq i \leq N} \Phi_{ii}(-\nu^2)$ must be negative. 
For the three dimensional hyperbolic spaces $\mathbb{H}^{3}_{\kappa}$, 
it is easy to see that imposing the following condition
\beqs
\left(\sqrt{\kappa^2+\nu^2}-\sqrt{\kappa^2+\mu^{2} }\right) +(N-1) \left( { \kappa \exp \left(- d \sqrt{\kappa^2+\nu^2}\right) \over \sinh \kappa d } \right) < 0   \label{H3condition}
\eeqs
implies the condition (\ref{generalNboundstatecondition}) for the principal matrix (\ref{PhiH3}) at $z=-\nu^2$. In order to find the sufficient criterion for this to be negative, we must necessarily have $\nu<\mu$. For this reason, let us define the function $F_1(\nu)$ to be the right-hand side of (\ref{H3condition}). Let $\nu_c$ satisfies $F_{1}'(\nu_c)=0$. Since $F_{1}''(\nu) > 0$, we obtain that 

\beqs
\inf_{0 < \nu <\mu} F_1(\nu) =
\begin{cases}
\begin{split}
F_1(\mu) \;,
\end{split}
& \mathrm{if} \; \nu_c \geq \mu \\[2ex]
\begin{split}
F_1(\nu_c) \;,
\end{split}
& \mathrm{if} \; 0 \leq \nu_c < \mu\;, \\[2ex]
\begin{split}
F_1(0) \;,
\end{split}
& \mathrm{otherwise} \;
\end{cases}
\eeqs
Using this, we can see that $\inf_{0 < \nu <\mu} F_1(\nu)<0$ if and only if
\beqs
(N-1) < \exp\left( d \sqrt{\kappa^2 + \mu^2} -1 \right) \; {\sinh \kappa d \over \kappa d} \;.
\eeqs
Thus, we arrive at the claimed inequalities given in the statement (ii).

For two dimensional hyperbolic spaces $\mathbb{H}^{2}_{\kappa}$, if we impose the condition
\beqs
\left[ \psi \left( {1 \over 2} + \sqrt{{\nu^2 \over \kappa^2} + {1 \over 4}} \right) - \psi \left( {1 \over 2} + \sqrt{{\mu^2 \over \kappa^2} +{1 \over 4}} \right) \right]  + (N-1) \; Q_{ {1 \over 2} + \sqrt{{\nu^2 \over \kappa^2} + {1 \over 4}} } \left(\cosh \kappa d \right) < 0 \;, 
\label{H2condition}
\eeqs
it implies (\ref{generalNboundstatecondition}) for the principal matrix (\ref{PhiH2}) at $z=-\nu^2$ since digamma function $\psi(x)$ is an increasing function for all real positive $x$ whereas the Legendre function of second type $Q_\lambda(x)$ is decreasing function for all real $x>1$. Similar to the three dimensional case, we must necessarily have $\nu < \mu$ since $Q_\lambda$ is always positive. We first find an upper bound to the first term in the left hand side of the inequality (\ref{H2condition}) using the integral representation (\ref{intreppsi}) and the bound ${1 \over 1-e^{-t}} \geq {1 \over t}$ for all $t$ (Note that the difference of the exponentials are always negative due to $\nu<\mu$). Similarly, we can also find an upper bound for the Legendre function of second type in the second term of the inequality (\ref{H2condition}) by using its integral representation (\ref{intrepQ}) together with the bound $\cosh r -\cosh a \geq {1 \over 2} (r^2 -a^2)$. Hence, imposing 
\beqs
\log\left( {{1 \over 2} + \sqrt{{\nu^2 \over \kappa^2} + {1 \over 4}} \over  {{1 \over 2} + \sqrt{{\mu^2 \over \kappa^2} +{1 \over 4}}} }\right)  + {2 (N-1) \over \kappa d} {1 \over 1  + \sqrt{{\nu^2 \over \kappa^2} + {1 \over 4}}} < 0 \label{H2condition2}
\eeqs
implies (\ref{H2condition}). Here we have used the integral representation \cite{Lebedev} of modified Bessel function 
of the third kind $K_0 (x)= \int_{1}^{\infty} e^{- x t}/\sqrt{t^2 -1} \; d t$ and the upper bound for it $K_0(x) < {e^{-x/2} \over x/2} < {1 \over x/2}$ for all real $x>0$, which was proved in our earlier work \cite{point interactions on manifolds1}. It is hard to find the infimum of the left-hand side of (\ref{H2condition2}), so we use the simplified  and the following less sharper bound to (\ref{H2condition}):
\beqs
\log\left( {t(\nu) \over t(\mu)}\right)  +  {c (N-1)  \over t(\nu)} < 0 \;, \label{infimum2}
\eeqs
where we have defined $t=t(\nu)= {1 \over 2} + \sqrt{{\nu^2 \over \kappa^2} + {1 \over 4}} $  and $c=2/\kappa d$. 
Similar to the three dimensional case, we define the function $F_2(\nu)$ to be the left-hand side of the above inequality (\ref{infimum2})
\beqs
F_2(\nu)=f(t) = \log(t/t(\mu)) + {c (N-1)\over t} \;. \label{F_2} 
\eeqs
Then, it is easy to find the infimum of the function $F_2(\nu)$:
\beqs
\inf_{0 < \nu <\mu} F_2(\nu) = \inf_{t(0) < t < t(\mu)} f(t) =
\begin{cases}
\begin{split}
f(t(\mu)) \;,
\end{split}
& \mathrm{if} \;\; t_c \geq t(\mu) \\[2ex]
\begin{split}
f(t_c) \;,
\end{split}
& \mathrm{if} \;\; t(0)=1 \leq t_c < t(\mu)\;, \\[2ex]
\begin{split}
f(t(0)) \;,
\end{split}
& \mathrm{otherwise} \;,
\end{cases}
\eeqs
where $t_c=c(N-1)$ is the solution of $f'(t_c)=0$. This infimum of $F_2(\nu)$ is negative for some $\nu$ if and only if
either (\ref{2dboundcond1}) or (\ref{2dboundcond2}) holds, then there exists $N$ bound states.

\end{proof}

\begin{rem} Let us consider the limiting case, where the sectional curvature $-\kappa^2$ of the hyperbolic manifolds approaches zero. In order to compare the result with the flat space results, we consider for simplicity two point $\delta$-interactions with the same strength ($\mu_1=\mu_2=\mu$).  From the explicit bounds given in the statements $(ii)$ and $(iii)$ converge to the bounds $\mu d > 2 e$ 
in two dimensions and $\mu d > 1$ 
in three dimensions as $\kappa \rightarrow 0$. These results are  pretty consistent with the ones given in the Proposition \ref{ConditionBoundstatesFlatresults} except that the two-dimensional results are slightly different. However, this is due to the estimations that we made in order to be able to find an  analytical result.  Furthermore, for $N=1$ we always have one bound state no matter how small $\kappa$ and other parameters are. 
\end{rem}

Actually, similar criteria can also be found for compact manifolds with Ricci tensor bounded from below and Cartan-Hadamard manifolds by using the heat kernel upper and lower bounds (\ref{Cheeger-Yau remark}), (\ref{heatkernel bound1}) and (\ref{heatkernel bound2}). However, the results would depend on the unknown coefficients $C_1, C_2,C_3$ and $C_4$ and this would not be useful from the physical point of view.

We can also improve the above results by the following proposition, the result of which is not necessary to understand the next part of the paper. 

\begin{prop}
Suppose that $\mu_j \geq \mu_2 > \mu_1=\mu$ for all $i \geq 3$ (order it by renumbering $\mu_i$'s) and that there exists $\nu_*>0$ such that $\Phi_{11}(-\nu_{*}^2)<0$, $\Phi_{22}(-\nu_{*}^2)<0$,  and
\beqs 
\left(\Phi_{11}(-\nu_{*}^2) + \Phi_{22}(-\nu_{*}^2) \right) - \Bigg[ \left(\Phi_{11}(-\nu_{*}^2) - \Phi_{22}(-\nu_{*}^2) \right)^2 + 4 \left(\sum_{j \neq k} |\Phi_{jk}(-\nu_{*}^2)|\right)^2 \Bigg]^{1/2} < 0 \;,
\eeqs
then there exist $N$ bound states. 
\end{prop}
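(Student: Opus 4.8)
The plan is to reduce the claim, exactly as in Theorem \ref{Gersnumberbound}, to the negative definiteness of the principal matrix $\Phi(-\nu_*^2)$, after which Lemma \ref{lemma negative definite} immediately yields the $N$ bound states. The whole task is thus to bound the largest eigenvalue of the real symmetric matrix $\Phi(-\nu_*^2)$ from above and to show that it is negative. Writing $S:=\sum_{j\neq k}|\Phi_{jk}(-\nu_*^2)|$ and dropping the argument $-\nu_*^2$, the hypothesis is precisely the statement that the $2\times2$ comparison matrix $\left(\begin{smallmatrix}\Phi_{11}&S\\S&\Phi_{22}\end{smallmatrix}\right)$ is negative definite, i.e. that its larger eigenvalue $\tfrac12\bigl[(\Phi_{11}+\Phi_{22})+\sqrt{(\Phi_{11}-\Phi_{22})^2+4S^2}\bigr]$ is negative; equivalently $\Phi_{11},\Phi_{22}<0$ together with $\Phi_{11}\Phi_{22}>S^2$. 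It is this determinant form that I would carry through the argument.

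First I would record the diagonal ordering. By the Feynman--Hellmann computation already recorded in the proposition above, the derivative of $\Phi_{ii}$ with respect to $\mu_i$ equals $-\int_0^\infty K_t(a_i,a_i)\,2t\mu_i\,e^{-t\mu_i^2}\,dt<0$ (differentiation under the integral being licensed by Lemma \ref{holomorphic}), so $\Phi_{ii}(-\nu_*^2)$ is strictly decreasing in $\mu_i$. On $\mathbb{H}^2$ and $\mathbb{H}^3$ the diagonal heat kernel $K_t(a_i,a_i)$ is independent of the point, so the renumbering $\mu_1<\mu_2\leq\mu_j$ ($j\geq3$) transfers verbatim to $\Phi_{11}\geq\Phi_{22}\geq\Phi_{jj}$. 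Since the hypothesis assumes $\Phi_{11},\Phi_{22}<0$, every diagonal entry is then negative and the two least-negative of them are exactly $\Phi_{11}$ and $\Phi_{22}$; equivalently $|\Phi_{11}|\leq|\Phi_{22}|\leq|\Phi_{jj}|$ for all $j\geq3$.

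The decisive step is to replace the plain Gershgorin localization used in Theorem \ref{Gersnumberbound} by its sharper Brauer (ovals of Cassini) refinement: every eigenvalue $\omega$ of the real symmetric matrix $\Phi(-\nu_*^2)$ satisfies $|\omega-\Phi_{pp}|\,|\omega-\Phi_{qq}|\leq R_pR_q$ for some pair $p\neq q$, where $R_i:=\sum_{j\neq i}|\Phi_{ij}|\leq S$. I would argue by contradiction: suppose $\Phi(-\nu_*^2)$ had an eigenvalue $\omega\geq0$. Since every diagonal entry is negative, $|\omega-\Phi_{pp}|=\omega-\Phi_{pp}\geq|\Phi_{pp}|$ and likewise for $q$, whence $|\Phi_{pp}||\Phi_{qq}|\leq R_pR_q\leq S^2$. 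On the other hand, because $|\Phi_{11}|,|\Phi_{22}|$ are the two smallest diagonal magnitudes, any pair obeys $|\Phi_{pp}||\Phi_{qq}|\geq|\Phi_{11}||\Phi_{22}|=\Phi_{11}\Phi_{22}$. Combining these gives $\Phi_{11}\Phi_{22}\leq S^2$, contradicting the hypothesis. Hence no eigenvalue is nonnegative, $\Phi(-\nu_*^2)$ is negative definite, and Lemma \ref{lemma negative definite} completes the proof.

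The main obstacle is identifying the correct localization theorem: the plain Gershgorin disks only yield the wasteful condition $\Phi_{11}+S<0$ (which, since $|\Phi_{22}|>|\Phi_{11}|$, is strictly stronger than $\Phi_{11}\Phi_{22}>S^2$), and it is the Cassini-oval sharpening, combined with the observation that the extremal product of diagonal magnitudes is attained by the pair $(1,2)$, that produces the genuinely weaker criterion claimed here. A secondary point requiring care is the transfer of the $\mu$-ordering to the diagonal ordering, which rests on the spatial homogeneity of the heat kernel on $\mathbb{H}^2$ and $\mathbb{H}^3$; on a generic manifold with Ricci bounded below the location dependence of $K_t(a_i,a_i)$ would have to be controlled separately before the pair $(1,2)$ could be singled out.
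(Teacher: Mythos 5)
Your proof is correct and follows essentially the same route as the paper: Brauer's ovals-of-Cassini localization for the real symmetric matrix $\Phi(-\nu_*^2)$, reduced to Lemma \ref{lemma negative definite} via negative definiteness. One point deserves emphasis: you have (rightly) read the hypothesis in its determinant form $\Phi_{11}\Phi_{22}>\bigl(\sum_{j\neq k}|\Phi_{jk}|\bigr)^2$, i.e.\ with a $+$ in front of the square root; as printed, with the $-$ sign, the displayed inequality already follows from $\Phi_{11}<0$ and $\Phi_{22}<0$, and the resulting vacuous condition cannot imply $N$ bound states (for $N=2$ on $\mathbb{H}^3$ both diagonal entries are negative for every sufficiently small $\nu_*$ irrespective of $d$), so your reading — the rightmost real point of the union of the Cassini ovals — is the one under which both the statement and the paper's own bound on $\bigcup_{j\neq k}K_{j,k}\cap\mathbb{R}$ are meaningful. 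Your explicit justification that $(1,2)$ is the extremal pair, via monotonicity of $\Phi_{ii}$ in $\mu_i$ and the point-independence of $K_t(a_i,a_i)$ on homogeneous spaces, fills a step the paper leaves implicit, and your caveat that this ordering needs separate control on a generic manifold is apt.
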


\begin{proof} We will use Brauer-Cassini's theorem (Theorem 6.4.7 in \cite{Horn}): All eigenvalues of $\Phi$ are located in the union of  $N(N-1)/2$ ovals of Cassini $K_{j,k}$
\beqs \bigcup_{j \neq k}^{N} \left\{z \in \mathbb{C} |z-\Phi_{jj}| \; |z-\Phi_{kk}| \leq \sum_{i \neq j} |\Phi_{ij}|\sum_{i \neq k} |\Phi_{ik}| \right\} \;.
\eeqs
Then, it holds that 
\beqs \begin{split} &  \bigcup_{j \neq k}^{N}  K_{j,k} \cap \mathbb{R} \leq {1 \over 2} \left(\Phi_{11}(-\nu^2) + \Phi_{22}(-\nu^2) \right) \\[2ex]  & \hspace{2cm} \nonumber \hspace{2cm}- \Bigg[ \left(\Phi_{11}(-\nu^2) - \Phi_{22}(-\nu^2) \right)^2 +  4 \left(\sum_{j \neq k} |\Phi_{jk}(-\nu^2)|\right)^2  \Bigg]^{1/2} \;. \end{split} \eeqs
If the right hand side is negative for some $\nu_*>0$, then the assumptions of  Lemma \ref{lemma negative definite} holds, so we obtain the desired result.  
\end{proof}
This result is stronger than part $(i)$ in Theorem \ref{Gersnumberbound} since
\beqs \begin{split} &
\Phi_{ii}(-\nu^2) + \sum_{j \neq i} |\Phi_{ij}(-\nu^2)| \leq \Phi_{11}(-\nu_{*}^2) + \sum_{j \neq i} |\Phi_{ij}(-\nu_{*}^2)| \\[2ex]  & \hspace{1cm} \leq    {1 \over 2} \left(\Phi_{11}(-\nu^2) + \Phi_{22}(-\nu^2) \right) - \Bigg[ \left(\Phi_{11}(-\nu^2) - \Phi_{22}(-\nu^2) \right)^2 +  4 \left(\sum_{j \neq k} |\Phi_{jk}(-\nu^2)|\right)^2  \Bigg]^{1/2} \;.
\end{split}
\eeqs

%%%%%%%%%%%%%%%%%%%%%%%%%%%%%%%%%%%%%%%%%%%%

\section{A Relativistic Extension of the Model on $\mathbb{R}^2$ and $\mathbb{H}^2$}
\label{ARelativisticExtensionoftheModelon}

One relativistic extension of the above model on two dimensional Riemannian manifolds was first considered in \cite{Caglar1}. Here we are first going to summarize the basic idea of the construction of the model.  In this model,
relativistic Klein-Gordon particles interacts with finitely many localized sources on $M$. We use the units such that $\hbar=c=1$. The second quantized regularized Hamiltonian is formally given by
\beqs \begin{split}
&  H_{\epsilon}= {1 \over 2} \int_{M} d_{g}^2 x \; : \left[ \Pi^2 + \phi(x) \left(-\Delta_g +m^2 \right) \phi(x) \right]:  \\[2ex] & \hspace{3cm}- \sum_{i=1}^{N} g_i(\epsilon) \int_{M}   d_{g}^2 x \; K_{\epsilon/2}(a_i, x) \phi^{(-)}(x) \; \int_{M}   d_{g}^2 y \; K_{\epsilon/2}(a_i,y) \phi^{(+)}(y) \;, \label{relativisticH_epsilon}
\end{split}
\eeqs
where $\Pi(x)=\partial_0 \phi(x,t)$ at $t=0$ (also $\phi(x,0)=\phi(x)$) and the normal ordering is denoted by colon, and $\phi^{(-)}(x)$ is the positive frequency part of the real bosonic field operator, given in terms of the creation operator $a^{*}_{\sigma}$ (the index $\sigma$ is the analog of the momentum label in flat spaces):
\beqs
\phi^{(-)}(x) & = & \sum_{\sigma} { a^{*}_{\sigma} \; \overline{f_{\sigma}(x)}\over \sqrt{\omega(\sigma)}} \cr \omega^{2}_{\sigma} &=& \lambda(\sigma) + m^2
\eeqs
and $f_{\sigma}(x)$ are the orthonormal complete set of eigenfunctions of Laplace-Beltrami operator in $L^2(M)$, i.e., $-\Delta_g f_{\sigma}(x) = \lambda(\sigma) f_{\sigma}(x)$. This is a relativistic many-body problem, where the number of particles are conserved. Here is the idea of the paper \cite{Caglar1}: First, fictitious operators $\chi_i$ obeying ortho-fermion algebra are introduced 
\beqs
\chi_i \; \chi^{*}_j & = & \delta_{ij} \Pi_0 \cr \chi_i \; \chi_j & = & 0 \cr \sum_{i=1}^{N} \chi^{*}_i \; \chi_i & = & \sum_{i=1}^{N} \Pi_i = \Pi_1
\eeqs
where $\Pi_0$, $\Pi_1$ are the projection operators onto no ortho-fermion and 1-ortho-fermion states, respectively. Then, the following augmented operator is defined in matrix form on the augmented symmetrized Fock space $\mathcal{F}_s(H) \oplus \mathcal{F}_s(H) \otimes \mathbb{C}^N$:
\beqs
\begin{bmatrix}
     (H_0-z) \Pi_0 &  \sum_i \int_M d_{g}^2 x K_{\epsilon/2}(a_i,x) \phi^{(-)}(x) \chi_i \\[2ex]
     \sum_j \int_M d_{g}^2 y K_{\epsilon/2}(a_j,y) \phi^{(+)}(y) \chi_{j}^{*} &  \sum_{k,l} {\delta_{kl} \over g_k} \chi^{*}_k \chi_l
\end{bmatrix} 
\eeqs
Then, there are two apparently different but equivalent formula for the projection of the inverse of the above operator onto no ortho-fermion subspace and gives an explicit formula for regularized resolvent of our original Hamiltonian:
\beqs \begin{split}
 & (H_\epsilon -z)^{-1} = (H_0-z)^{-1} + (H_0-z)^{-1} \sum_{i=1}^{N} \int_M d_{g}^2 x \;  K_{\epsilon/2}(a_i,x) \phi^{(-)}(x) \; \Phi^{-1}_{\epsilon}(E) \; \\[2ex]  & \hspace{7cm} \times \int_M d_{g}^2 y \; K_{\epsilon/2}(a_i,y) \phi^{(+)}(y) \; (H_0-z)^{-1}
\end{split}
\eeqs
where the regularized principal operator is defined as
\beqs \begin{split}
& \Phi_{\epsilon}= \sum_{i=1}^{N} {1 \over g_i(\epsilon)} \chi^{*}_i \chi_i - \sum_{i,j=1}^{N} 
\int_M d_{g}^2 y \; K_{\epsilon/2}(a_i,y) \phi^{(+)}(y) \; (H_0-z)^{-1} \\[2ex] & \hspace{7cm} \times \int_M d_{g}^2 x \;  K_{\epsilon/2}(a_i,x) \phi^{(-)}(x) \; \chi^{*}_i \chi_j \;.
\end{split}
\eeqs
After normal ordering the above regularized principal operator
and considering the single boson particle states, and choosing the coupling constants $g_i(\epsilon)$
\beqs
{1 \over g_i(\epsilon)} = {1 \over \sqrt{\pi}} \int_{0}^{\infty} d s \;  e^{-s^2/4} \; \int_{\epsilon}^{\infty} d u \; e^{s \mu_i \sqrt{u}} \; e^{-u m^2} K_u(a_i,a_i)
\eeqs
we obtain a nontrivial limit of the resolvent formula restricted to single boson state:
\beqs
(H-z)^{-1}= (H_0-z)^{-1} +  \sum_{i,j=1}^{N} (H_0-z)^{-1} \phi^{(-)}(a_i) \; \Phi^{-1}_{ij}(z) \; \phi^{(+)}(a_i) (H_0-z)^{-1}
\eeqs
where
\beqs \label{relativisticPhi}
\Phi_{ij}(z) = 
\begin{cases}
\begin{split}
{1 \over \sqrt{\pi}} \int_{0}^{\infty} d s \;  e^{-s^2/4} \; \int_{0}^{\infty} d u \; \left(e^{s \mu_i \sqrt{u}} - e^{s z \sqrt{u}}\right) \; e^{-u m^2} K_u(a_i,a_i)
\end{split}
& \mathrm{if} \;\; i = j \\[2ex]
\begin{split}
- {1 \over \sqrt{\pi}} \int_{0}^{\infty} d s \;  e^{-s^2/4} \; \int_{0}^{\infty} d u \; e^{s z \sqrt{u}} \; e^{-u m^2} K_u(a_i,a_j)
\end{split}
& \mathrm{if} \;\;  i \neq j.
\end{cases} \;,
\eeqs
and $\mu_i<m$ is the experimentally measured bound state energy of the single relativistic boson in the single $i$th delta center. In order to prevent pair productions, we must have $-m < \Re{(E)} < m$. The upper bound must be due to the attractiveness of the potential (see \cite{Caglar1} for the details). Similar to the non-relativistic version of this problem, we have showed that there exists a self-adjoint operator associated with the above resolvent formula in \cite{caglar}. Moreover, the eigenvalues $\omega$ of the principal matrix flow according to ${d \omega \over d E} <0$.

\begin{lem} The principal matrix given in (\ref{relativisticPhi})  for compact manifolds with Ricci tensor bounded from below and for Cartan-Hadamard manifolds is a matrix-valued holomorphic function on the complex plane, where $\Re{(E)}<m$. 
\end{lem}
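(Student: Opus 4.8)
The plan is to repeat the argument of Lemma \ref{holomorphic}, now applied to the \emph{double} integral over $s$ and $u$ defining the entries of (\ref{relativisticPhi}), by combining Olver's theorem (Theorem 1.1, Chapter 2 of \cite{Olver}) with the Weierstrass M-test \cite{Copson}. I set $\mathcal{R}=\{E\in\mathbb{C}:\Re(E)<m\}$ and fix a compact subset $\mathcal{D}\subset\mathcal{R}$ on which $\Re(E)\le m-\epsilon_1$ and $|\Im(E)|\le\eta$ for some $\epsilon_1,\eta>0$. For each fixed $(s,u)\in(0,\infty)^2$ the integrand is an entire function of $E$, being built from the entire functions $e^{sE\sqrt{u}}$, and it is jointly continuous in $(s,u,E)$ because $u\mapsto K_u(a_i,a_j)$ is $C^1$; hence the continuity and holomorphy hypotheses of Olver's theorem hold verbatim. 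The whole content of the proof is therefore the uniform convergence of the integrals on $\mathcal{D}$, which I would obtain by exhibiting a dominating function integrable over $(0,\infty)^2$ and invoking the M-test.

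The genuinely new feature compared with Lemma \ref{holomorphic} is that the exponent is now linear in $s$ and carries the square root $\sqrt{u}$, so the decay must be read off a quadratic form in the two integration variables. Using $|e^{sE\sqrt{u}}|=e^{s\Re(E)\sqrt{u}}\le e^{s(m-\epsilon_1)\sqrt{u}}$ together with the Gaussian weight $e^{-s^2/4}$ and the factor $e^{-um^2}$, the relevant exponent is $-\tfrac{1}{4}s^2+\Re(E)\,s\sqrt{u}-m^2 u$, a quadratic form in $(s,\sqrt{u})$. On the closed first quadrant $s,\sqrt{u}\ge 0$ this form is bounded above by $-\delta\,(s^2+u)$ for some $\delta>0$ uniform over $E\in\mathcal{D}$, and it is exactly the strict inequality $\Re(E)<m$ that makes this possible: writing $s=r\cos\theta$, $\sqrt{u}=r\sin\theta$, the coefficient of $r^2$ is negative for all $\theta\in[0,\pi/2]$ iff $\Re(E)\,c\sigma<\tfrac14 c^2+m^2\sigma^2$ with $c=\cos\theta$, $\sigma=\sin\theta$, and the arithmetic--geometric mean inequality $\tfrac14 c^2+m^2\sigma^2\ge m\,c\sigma$ shows this holds precisely when $\Re(E)<m$. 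I expect this two-parameter tail estimate to be the main obstacle, since one must control the coupling between the growth $e^{s\Re(E)\sqrt{u}}$ and the two decays $e^{-s^2/4}$ and $e^{-um^2}$ uniformly on $\mathcal{D}$.

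With this in hand the remaining estimates mimic Lemma \ref{holomorphic}. For the off-diagonal entries the geodesic factor $e^{-d^2(a_i,a_j)/C_3 u}$ appearing in the upper bounds (\ref{heatkernel bound1})--(\ref{heatkernel bound2}) suppresses the $u^{-D/2}$ singularity of $K_u(a_i,a_j)$ as $u\to0$, so together with the quadratic-form decay the majorant is integrable over $(0,\infty)^2$. For the diagonal entries I split $\int_0^\infty du=\int_0^1+\int_1^\infty$; on $(1,\infty)$ there is no singularity and the quadratic-form decay suffices, while on $(0,1)$ I bound the difference by $|e^{s\mu_i\sqrt{u}}-e^{sE\sqrt{u}}|=|\int_\gamma s\sqrt{u}\,e^{s\zeta\sqrt{u}}\,d\zeta|$ along the straight segment $\gamma$ joining $\mu_i$ and $E$. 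Since $\mu_i<m$ and $\Re(E)\le m-\epsilon_1$, one has $\Re(\zeta)\le m-\epsilon_1'$ on $\gamma$, giving $|e^{s\mu_i\sqrt{u}}-e^{sE\sqrt{u}}|\le |E-\mu_i|\,s\sqrt{u}\,e^{s(m-\epsilon_1')\sqrt{u}}$. The extra factor $\sqrt{u}$ turns the $u^{-D/2}$ singularity into $u^{(1-D)/2}$, integrable near the origin in the dimension $D=2$ relevant to the present section, while the accompanying factor $s$ is absorbed by the Gaussian.

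Having produced an integrable majorant at each stage, the M-test yields uniform convergence on every compact $\mathcal{D}\subset\mathcal{R}$, and applying Olver's theorem to the iterated integral — first in $u$ for fixed $s$, then in $s$ — shows that each $\Phi_{ij}(E)$ is holomorphic on $\mathcal{R}=\{\Re(E)<m\}$, with derivatives of all orders obtained by differentiating under the integral sign. Since a matrix with holomorphic entries is matrix-valued holomorphic, the claim follows; as a by-product this also justifies the differentiation under the integral sign used earlier to establish the flow $d\omega/dE<0$.
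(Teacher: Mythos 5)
Your proof is correct, but it takes a genuinely different route from the paper's. The paper first performs the change of variables $s=t/\sqrt{u}$, which converts $e^{sE\sqrt{u}}$ into $e^{tE}$ and pushes all of the $E$-dependence into the outer $t$-integral; the inner $u$-integral, $\int_0^\infty e^{-t^2/4u}e^{-um^2}K_u(a_i,a_j)u^{-1/2}\,du$, is then an $E$-independent weight that plays exactly the role the heat kernel played in Lemma \ref{holomorphic}, so the one-variable argument of that lemma (Olver's theorem plus the M-test on $-\epsilon_2\le\Re(E)\le m-\epsilon_1$) is reused verbatim. You instead keep the double integral in its original $(s,u)$ form and control the exponent $-\tfrac14 s^2+\Re(E)\,s\sqrt{u}-m^2u$ as a quadratic form on the first quadrant, showing via $\tfrac14\cos^2\theta+m^2\sin^2\theta\ge m\cos\theta\sin\theta$ that it is bounded by $-\delta(s^2+u)$ uniformly on compacts precisely when $\Re(E)<m$, and then apply Olver's theorem to the iterated integral. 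Both arguments are sound: your contour estimate $|e^{s\mu_i\sqrt{u}}-e^{sE\sqrt{u}}|\le|E-\mu_i|\,s\sqrt{u}\,e^{s(m-\epsilon_1')\sqrt{u}}$ correctly tames the $u^{-1}$ short-time singularity of the diagonal heat kernel in the relevant dimension $D=2$, and the off-diagonal singularity is killed by the geodesic factor as you say. The paper's substitution buys economy — the new lemma becomes a one-line corollary of the old one — while your direct analysis buys transparency: it exhibits the threshold $\Re(E)=m$ as exactly the value at which the quadratic form degenerates (along the ray $\tan\theta=1/2m$), which the paper's argument leaves implicit. The only point worth tightening is the passage from the jointly integrable majorant to holomorphy of the iterated integral: you should note explicitly that the inner $u$-integral is jointly continuous in $(s,E)$ (by dominated convergence with your majorant) and holomorphic in $E$ for each fixed $s$ (by a one-variable application of Olver's theorem), before applying Olver's theorem a second time in $s$.
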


\begin{proof}
In order to show analyticity of the above principal matrix, we first make a change of variable $s=t/\sqrt{u}$, then we have
\beqs \label{relativisticPhi}
\Phi_{ij}(E) = 
\begin{cases}
\begin{split}
{1 \over \sqrt{\pi}} \int_{0}^{\infty} d t \;  \left(e^{\mu_i t} - e^{t E} \right) \;  \int_{0}^{\infty} d u \;  \; { e^{-t^2/4u} \; e^{-u m^2} K_u(a_i,a_i) \over \sqrt{u}}
\end{split}
& \mathrm{if} \;\; i = j \\[2ex]
\begin{split}
- {1 \over \sqrt{\pi}} \int_{0}^{\infty} d t \;  e^{E t} \; \int_{0}^{\infty} d u \; { e^{-t^2/4u} \; e^{-u m^2} K_u(a_i,a_j) \over \sqrt{u}}
\end{split}
& \mathrm{if} \;\; i \neq j.
\end{cases} \;.
\eeqs
It is easy to see that $u$-integrations are uniformly convergent for all $t \in (0,\infty)$ using  Weierstrass's $M$ test. Then, the result of $u$ integrations are continuous functions of $t$. Hence, using the same line of arguments in the proof the Theorem \ref{holomorphic}, one can show that all the assumptions of the theorem given in the proof of Theorem \ref{holomorphic} are satisfied.
\end{proof}
\begin{lem}
Let
\beqs T_{ij}(E) ={1 \over g(E)} \; \Phi_{ij}(E) = {2\pi \over \log \left({ m- E \over m-\mu_i} \right)} \; \Phi_{ij}(E) \label{asymRelPrincipal}
\eeqs
for $E$ is real and $E<m$. Then, there exist $N$ continuously differentiable functions ${\omega_i(E) / g(E)}$ that represent the eigenvalues of $T_{ij}(E)$, where $\omega_i(E)$ is the eigenvalue of the matrix $\Phi_{ij}(E)$. Moreover, $\lim_{E \rightarrow -\infty} {\omega_i(E)/g(E)} =1$ for all $i$. 
\end{lem}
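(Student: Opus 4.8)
The plan is to follow the same strategy as in the proof of Lemma \ref{corollary asym}: reduce the statement to verifying that $\lim_{E\to-\infty}T(E)=\mathrm{diag}(1,\dots,1)$ and then invoke Proposition \ref{Kato}. By the preceding lemma the relativistic principal matrix $\Phi(E)$ is real symmetric and, being holomorphic for real $E<m$, continuously differentiable there; since $g(E)=\frac{1}{2\pi}\log\bigl(\frac{m-E}{m-\mu_i}\bigr)$ is smooth and strictly positive for $E$ sufficiently negative, $T(E)=\Phi(E)/g(E)$ inherits both properties on a neighbourhood of $-\infty$. Its eigenvalues are automatically $\omega_i(E)/g(E)$, so once the diagonal limit below is established, Proposition \ref{Kato} produces the $N$ continuously differentiable eigenvalue branches with limit $1$.

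First I would treat the off-diagonal entries. Writing $\Phi_{ij}(E)$ for $i\neq j$ in the form obtained after the substitution $s=t/\sqrt u$, the integrand is dominated for all $E\le 0$ by $\tfrac{1}{\sqrt\pi}\,e^{-t^2/4u}\,e^{-um^2}\,K_u(a_i,a_j)/\sqrt u$, since $e^{tE}\le 1$ there. By the heat-kernel upper bounds (\ref{heatkernel bound1}) and (\ref{heatkernel bound2}) this carries the factor $e^{-d^2(a_i,a_j)/C u}$ that tames the $u\to 0$ region (recall $d(a_i,a_j)>0$ for $i\neq j$), while $e^{-um^2}$ controls $u\to\infty$; integrating $e^{-t^2/4u}/\sqrt u$ in $t$ first shows this bound is jointly integrable on $(0,\infty)^2$. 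As $e^{tE}\to 0$ pointwise when $E\to-\infty$, dominated convergence gives $\Phi_{ij}(E)\to 0$, hence $T_{ij}(E)\to 0$.

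The substantive step is the diagonal asymptotics $\Phi_{ii}(E)\sim \frac{1}{2\pi}\log\bigl(\frac{m-E}{m-\mu_i}\bigr)$ as $E\to-\infty$. As in the nonrelativistic case the divergence is driven by the short-time behaviour of the heat kernel, so I would insert the leading term $K_u(a_i,a_i)\sim\frac{1}{4\pi u}$ of the expansion (\ref{asymheat}) for $D=2$; the difference $e^{\mu_i t}-e^{tE}$ vanishes as $u\to 0$ and so regularizes the resulting $1/u$ singularity. Performing the Gaussian $s$-integration produces complementary error functions, and the large-argument asymptotics of $e^{x^2}\erfc(x)$ extract precisely the logarithm $\frac{1}{2\pi}\log\frac{m-E}{m-\mu_i}$; it is cleanest to check this through $\frac{d}{dE}\Phi_{ii}(E)\sim -\frac{1}{2\pi(m-E)}$. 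To confirm that the full diagonal entry diverges (not merely its leading approximation) I would combine this with the Cheeger--Yau lower bound (\ref{heatkernellowerbound}) and the explicit hyperbolic heat kernel, exactly as in Lemma \ref{corollary asym}, so that $\Phi_{ii}(E)\to\infty$; the subleading coefficients $u_k$ contribute only bounded corrections and therefore do not affect the ratio $\Phi_{ii}(E)/g(E)\to 1$.

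The main obstacle is this last diagonal estimate. Unlike the nonrelativistic integral, one here faces a genuine double integral over $(s,u)$ with a $\sqrt u$ inside the exponentials, and one must justify extracting the universal logarithm uniformly in $E$ while showing that the curvature-dependent and subleading pieces stay bounded. Once the diagonal limit $\Phi_{ii}(E)/g(E)\to 1$ and the off-diagonal limit $T_{ij}(E)\to 0$ are in hand, we conclude $\lim_{E\to-\infty}T(E)=\mathrm{diag}(1,\dots,1)$, and Proposition \ref{Kato} yields the $N$ continuously differentiable eigenvalue functions $\omega_i(E)/g(E)$ with limit $1$, completing the proof.
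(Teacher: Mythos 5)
Your proposal is correct and follows exactly the route the paper intends: the paper's entire proof of this lemma is the single remark that it is ``essentially the same as the non-relativistic version,'' i.e.\ the argument of Lemma \ref{corollary asym} (off-diagonal decay by dominated convergence via the heat-kernel upper bounds, diagonal asymptotics from the short-time expansion $K_u(a_i,a_i)\sim 1/(4\pi u)$ yielding the Frullani-type logarithm $\frac{1}{2\pi}\log\frac{m-E}{m-\mu_i}$, then Proposition \ref{Kato}). You in fact supply more detail than the paper does, including the one genuine technical point (handling the $(s,u)$ double integral after the substitution $s=t/\sqrt{u}$), and your sketch of that step is sound.
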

The proof is essentially the same as the non-relativistic version of the model.

\begin{thm}
\begin{itemize}

\item[(i)] If there exists a real $E = E_*$ and $E_*>\mu_i$, and $-m<E_*<m$ such that 
\beqs \Phi_{ii}(E_*) + \sum_{j \neq i} |\Phi_{ij}(E_*)| <0 \;, \label{relativisticgeneralNboundstatecondition}
\eeqs
then there are $N$ bound states.

\item[(ii)]  If 
\beqs
%m- \mu \geq e  \hspace{0.5cm}  & \mathrm{and} & \hspace{0.5cm} 
(N-1) < { d\, (m-\mu) \over e} \label{2drboundcond1} 
\eeqs
%
%or 
%
%\beqs
% (m-\mu) < e  \hspace{0.5cm}  & \mathrm{and} & \hspace{0.5cm} (N-1) < d \, \log (m-\mu) %\label{2drboundcond2} \;.
%\eeqs
%
holds in $\mathbb{R}^2$, then there are $N$ bound states.

\item[(iii)] If either 
\beqs \hskip-1cm
\sqrt{m^2 + \kappa^2/4}- \mu \geq e \; (\sqrt{m^2 + \kappa^2/4}- m ) \; & \mathrm{and} & \; (N-1) < {d \over 24(4 \pi)^{3/2}} {  (\sqrt{m^2 + \kappa^2/4}-\mu) \over e} \label{2drboundcond1} 
\eeqs
or 
\beqs
\hskip-1cm
 (\sqrt{m^2 + \kappa^2/4}-\mu) & < & e (\sqrt{m^2 + \kappa^2/4}- m) \;\;\; \mathrm{and}  \cr \; (N-1) & < & {d \; (\sqrt{m^2 + \kappa^2/4}- m) \over 24(4 \pi)^{3/2}} \, \log ({\sqrt{m^2 + \kappa^2/4}-\mu \over (\sqrt{m^2 + \kappa^2/4}- m) }) \label{2drboundcond2} \;
\eeqs
holds in $\mathbb{H}^2$, then there are $N$ bound states.
\end{itemize}
\end{thm}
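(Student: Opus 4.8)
The plan is to prove all three parts of this relativistic theorem as direct analogues of Theorem~\ref{Gersnumberbound}, exploiting the structural parallelism between the relativistic principal matrix (\ref{relativisticPhi}) and its non-relativistic counterpart. For part (i), I would apply Gerschgorin's theorem together with the relativistic version of Lemma~\ref{lemma negative definite}: since the eigenvalues $\omega_i(E)/g(E) \to 1$ as $E \to -\infty$ (the preceding lemma), each $\omega_i(E)$ is positive for sufficiently negative $E$; if condition (\ref{relativisticgeneralNboundstatecondition}) forces every Gerschgorin disk into the negative half-line at $E=E_*$, then every eigenvalue is negative there, so by the intermediate value theorem combined with the monotone flow ${d\omega/dE}<0$ each $\omega_i$ must cross zero exactly once, yielding precisely $N$ roots of $\det\Phi(E)=0$ and hence $N$ bound states.

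For the explicit criteria in parts (ii) and (iii), the strategy mirrors the non-relativistic proof: bound $\max_i G_i(E) \le \max_i \Phi_{ii}(E) + (N-1)\max_{i\ne j}|\Phi_{ij}(E)|$ and seek $E_*$ making this negative. First I would use the short-time heat-kernel asymptotics (\ref{asymheat}) to extract the leading behavior of the diagonal entries $\Phi_{ii}(E)$ as $E$ ranges below $m$; the change of variable $s=t/\sqrt{u}$ already carried out in the previous lemma recasts the entries in a form where the $u$-integration produces the effective logarithmic growth $g(E)=\frac{1}{2\pi}\log\!\big(\frac{m-E}{m-\mu_i}\big)$. I would then lower-bound the diagonal terms and upper-bound the off-diagonal terms using the heat-kernel bounds of Remark~\ref{heatkernelbounds} together with the Cheeger--Yau lower bound (\ref{heatkernellowerbound}) in the hyperbolic case (\ref{lowerboundheatkernelhperbolicspaces}) and the estimate (\ref{mando}). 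For $\mathbb{H}^2$ this reproduces the Legendre-function $Q_\lambda(\cosh\kappa d)$ structure governing the off-diagonal decay, and the relevant shift $\sqrt{m^2+\kappa^2/4}$ arises because the relativistic dispersion $\omega_\sigma^2=\lambda(\sigma)+m^2$ adds $m^2$ to the spectral parameter while the curvature contributes the $\kappa^2/4$ bottom of the essential spectrum.

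Having reduced matters to a scalar inequality, I would define an auxiliary function $F(E)$ equal to the left-hand side of the reduced condition and minimize it over the admissible range $\mu<E<m$ (the restriction $E>\mu_i$ being needed so that the diagonal difference is negative, exactly as $\nu<\mu$ was required before). As in the treatment of (\ref{infimum2})--(\ref{F_2}), the minimization splits into three cases according to whether the interior critical point $t_c$ lies inside the interval or at an endpoint; comparing the infimum against zero then yields the dichotomy between the ``$\ge e$'' and ``$< e$'' regimes, which is precisely what produces the two alternative inequalities displayed in (\ref{2drboundcond1}) and (\ref{2drboundcond2}).

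The main obstacle I anticipate is controlling the double-integral structure of the relativistic entries when deriving sharp yet tractable bounds: unlike the non-relativistic case, where the Legendre and digamma integral representations give closed forms (\ref{PhiH2}), here the extra Gaussian $s$-integration (or equivalently the $e^{-t^2/4u}$ factor after the change of variables) couples the spectral variable $E$ to the heat-kernel time in a way that resists a clean closed expression. I expect to handle this by first performing the $u$-integration against the hyperbolic heat kernel to expose a modified-Bessel or error-function kernel, then bounding that kernel crudely using inequalities such as $K_0(x)<2/x$ (already invoked in the non-relativistic proof) and $\cosh r-\cosh a\ge\frac12(r^2-a^2)$; the constant $24(4\pi)^{3/2}$ appearing in (\ref{2drboundcond1})--(\ref{2drboundcond2}) is the residue of these successive estimates, so the delicate point is simply tracking the accumulated constants without losing the correct $E$-dependence needed for the infimum analysis.
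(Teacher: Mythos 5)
Your proposal follows essentially the same route as the paper: part (i) via Gerschgorin's theorem plus the relativistic analogue of Lemma \ref{lemma negative definite} (positivity of the eigenvalues as $E\to-\infty$ from the asymptotic lemma, then the monotone flow ${d\omega / dE}<0$), and parts (ii)--(iii) by bounding the off-diagonal entries with the flat/Davies--Mandouvalos heat-kernel estimates after the substitution $s=t/\sqrt{u}$, lower-bounding the diagonal with (\ref{mando}), and reducing to the same scalar function as in (\ref{infimum2}) whose three-case infimum analysis yields the $\ge e$ versus $<e$ dichotomy. The only cosmetic difference is the order in which you propose to perform the $u$-integration and the heat-kernel bounding, which leads to the same exponential integral and the same constants.
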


\begin{proof}
The proof of part (i) is the same as that of the part (i) of  Theorem \ref{Gersnumberbound}. The principal matrix in $\mathbb{R}^2$ is given by
\beqs
\Phi_{ij}(E)=\begin{cases}
\begin{split}
{1 \over 2\pi} \log \left( {m-E \over m- \mu_i} \right) 
\end{split}
& \mathrm{if} \;\; i = j \\[2ex]
\begin{split}
- {1 \over 2\pi}  \int_{0}^{\infty} d s \; { e^{-d_{ij} (m\sqrt{s^2 +1} -E s)} \over \sqrt{s^2 +1}}
\end{split}
& \mathrm{if} \;\; i \neq j.
\end{cases} \;. \label{relativisticprincipalflat}
\eeqs
We first find an upper bound on the off-diagonal term $|\Phi_{ij}|$
\beqs
|\Phi_{ij}(E)| \leq  {1 \over 2\pi}  \int_{0}^{\infty} d s \; e^{-d_{ij} (m -E) s} = {1 \over 2\pi d_{ij} (m-E)} \leq {1 \over 2\pi d (m-E)} \;. 
\eeqs
If we impose the condition
\beqs
{1 \over 2\pi} \log \left( {m-E \over m- \mu} \right) +  {(N-1) \over 2\pi d (m-E)} < 0 \;, \label{relativisticflatinequality}
\eeqs
it implies (\ref{relativisticgeneralNboundstatecondition}) for the principal matrix (\ref{relativisticprincipalflat}). Since the left-hand side of the above inequality (\ref{relativisticflatinequality}) is the same as (\ref{infimum2}) except for the form of the function $t(\nu)$ and $c$. Hence, the result is straightforward.

An upper bound of the heat kernel on $\mathbb{H}^2$ has been calculated in \cite{Davies Mandouvalos} without the constant coefficient since it was irrelevant for their purposes. If we follow the same line of arguments in the proof, it is not difficult to compute the constant sharply and obtain 
\beqs \begin{split}
K_t(x,y) & <  {4 \sqrt{2} e^{-3/8} (16 \sqrt{2} + 4 \sqrt{\pi}) \over (4\pi)^{3/2}} (1+\kappa d(x,y)) {\exp\left( - {d^2(x,y) \over 4t} -{\kappa d(x,y) \over 2} -{\kappa^2 t \over 4}\right) \over t \sqrt{1+\kappa d(x,y) + \kappa^2 t}}  \\[2ex] & <  {3 \over t} \; \exp\left( - {d^2(x,y) \over 4t} -{\kappa^2 t \over 4}\right) \;,
\end{split}
\eeqs
for all $x,y \in M$ and $t>0$. By changing the variable $s=t/\sqrt{u}$ in the off-diagonal part of the principal matrix and using the above bound for the heat kernel we get
\beqs
|\Phi_{ij}(E)| \leq 6 \int_{0}^{\infty} d s \; e^{-d(a_i,a_j) s (\sqrt{m^2 + \kappa^2/4} - E)} = {6 \over d(a_i,a_j) \left(\sqrt{m^2 + \kappa^2/4} - E \right)} \;.
\eeqs
Using the lower bound of the heat kernel (\ref{mando}) and the fact that $(1+\kappa^2 t)^{-1/2} \leq e^{-\kappa^2 t /2}$ for the diagonal part of the principal matrix, and imposing the condition
\beqs
{1 \over 4(4\pi)^{3/2}} \log \left( {E- \sqrt{m^2 + \kappa^2/4} \over \mu - \sqrt{m^2 + \kappa^2/4} }\right) + {6 (N-1) \over d \left(\sqrt{m^2 + \kappa^2/4} - E \right)} < 0 \;, \label{relativisticNboundstatecondition}
\eeqs
it implies  (\ref{relativisticgeneralNboundstatecondition}). The function on the left-hand side of the inequality is the same as in (\ref{infimum2}) except for the form of $t(\nu)$ and $c$, so the result is immediate.

\end{proof}
\section*{Acknowledments} 
I would like to thank Osman Teoman Turgut for useful discussions.  I also thank the anonymous referee for his/her valuable comments and suggestions which improve the paper.

%%%%%%%%%%%%%%%%%%%%%%%%%%%%%%%%%%%%%%%%%%%%%%%%

%\begin{appendices}

\section*{Appendix A: A Proof of Theorem 1} \label{proofoffirsttheorem}

\setcounter{equation}{0}

\renewcommand{\theequation}{A-\arabic{equation}}

It is well-known that Laplacian $\triangle_g$ is essentially self-adjoint on complete Riemannian manifolds without boundary so there exists a unique self-adjoint extension of the Laplacian. Since the interaction term in $H_\epsilon$ is a bounded finite rank symmetric perturbation to the Laplacian,  $H_\epsilon$ is self-adjoint on the same domain of the Laplacian.  We first find the resolvent associated with the regularized Hamiltonian (\ref{regularized H}). Let $\tilde{K}_{\epsilon}(x,a_i)=\sqrt{\lambda_{i}(\epsilon)} \; K_{\epsilon}(x,a_i)$. This rescaling is necessary to preserve the symmetry property of the integral kernel. Then, a simple computation from the inhomogenous equation $(H_\epsilon -z) \psi = f$ for some $z$, $\Im(z)  \neq 0$ shows that
\beqs \psi(x) = \int_{M} R_0(x,x')\left(\sum_{i=1}^{N} (\tilde{K}_{\epsilon}^{i},\psi) \; \tilde{K}_{\epsilon}(x',a_i) +f(x')\right) \; d^{D}_{g} x' \label{psi} \;, \eeqs
where  $(\tilde{K}_{\epsilon}^{i},\psi)=\int_{M} \tilde{K}_{\epsilon}(x,a_i) \psi(x)\; d^{D}_{g} x$. If we multiply the above equation by $\tilde{K}_{\epsilon}(x,a_j)$ and integrate with respect to $x$, we get 
\beqs \sum_{j=1}^{N} A_{ij}^{\epsilon} (z) \; (\tilde{K}_{\epsilon}^{j}, \psi) = \iint_{M^2} R_0(x,x'|z) \tilde{K}_{\epsilon}(x,a_i) f(x') \; d_{g}^{D} x \; d_{g}^{D} x' \;, \label{regA1}\eeqs
for all $i=1, \ldots, N$. Here we have defined
\beqs \label{A} A_{ij}^{\epsilon} (z) =
\begin{cases}
\begin{split}
1- \iint_{M^2} R_0(x,x'|z) \tilde{K}_{\epsilon}(x,a_i) \tilde{K}_{\epsilon}(x',a_i)\; d_{g}^{D} x \; d_{g}^{D} x'
\end{split}
& \mathrm{if} \;\; i = j \\[2ex]
\begin{split}
- \iint_{M^2} R_0(x,x'|z) \tilde{K}_{\epsilon}(x,a_i) \tilde{K}_{\epsilon}(x',a_j)\; d_{g}^{D} x \; d_{g}^{D} x'
\end{split}
& \mathrm{if} \;\; i \neq j.
\end{cases} \;.
\eeqs
Solving $(\tilde{K}_{\epsilon}^{i},\psi)$ from (\ref{regA1}) and substituting this back into (\ref{psi}), we obtain
\beqs \begin{split} \psi(x) = \int_{M} R_0(x,x'|z) f(x') \; d_{g}^{D}x' + \sum_{i,j=1}^{N} & \iiint_{M^3} R_0(x,y|z) \tilde{K}_{\epsilon}(y,a_i) \left[A^{-1}(z)\right]_{ij} \\[2ex] & \times \, \tilde{K}_{\epsilon}(x',a_j) R_0(x',y') f(y') \; d_{g}^{D} x' \; d_{g}^{D} y \; d_{g}^{D} y' \end{split} \;. \eeqs
Inserting an $N \times N$ diagonal matrix $\mathbb{E} \mathbb{E}^{-1}$ before and after the matrix $A^{-1}$, where $\mathbb{E}_{ij}=\sqrt{\lambda_i(\epsilon)} \delta_{ij}$, yields
\beqs \begin{split} \psi(x) = \int_{M} R_0(x,y|z) f(y) \; d_{g}^{D}y + \sum_{i,j=1}^{N} & \iiint_{M^3} R_0(x,y'|z) K_{\epsilon}(y',a_i) \left[\Phi^{-1}(z)\right]_{ij} \\[2ex] & \times \, K_{\epsilon}(x',a_j) R_0(x',y) f(y) \; d_{g}^{D} x'  \; d_{g}^{D} y' \; d_{g}^{D} y \;, \end{split} \label{R_e} \eeqs
where
\beqs \label{Phi(epsilon)} \Phi_{ij}^{\epsilon} (z) =
\begin{cases}
\begin{split}
{1 \over \lambda_i(\epsilon)} - \iint_{M^2} R_0(x,x'|z) K_{\epsilon}(x,a_i) K_{\epsilon}(x',a_i)\; d_{g}^{D} x \; d_{g}^{D} x'
\end{split}
& \mathrm{if} \;\; i = j \\[2ex]
\begin{split}
- \iint_{M^2} R_0(x,x'|z) K_{\epsilon}(x,a_i) K_{\epsilon}(x',a_j)\; d_{g}^{D} x  \; d_{g}^{D} x'
\end{split}
& \mathrm{if} \;\; i \neq j \;.
\end{cases}
\eeqs
The resolvent (\ref{R_e}) has a nontrivial limit if and only if the diagonal terms of $\Phi$ converge to a nontrivial limit as $\epsilon \rightarrow 0^+$. For this reason, one can choose the coupling constants $\lambda_i(\epsilon)$ as in (\ref{barecouplingdeltamanifolds}) so that the limit of the resolvent (\ref{R_e}) converges to (\ref{resolvent}). Here the above matrix (\ref{Phi(epsilon)}) converges to the matrix (\ref{phiheat3}), called principal matrix.

However, it is not obvious at this stage that the operator $R$ obtained from the above limiting procedure is actually a resolvent of a densely defined closed operator. In Euclidean case, one can prove that the operator $R$ given in (\ref{resolvent}) is the resolvent of a closed operator by first going to Fourier space and then showing that the limit is injective \cite{Albeverio 1} since the pseudo-resolvent is a resolvent of a closed operator if and only if $\ker
(R) = \{0 \}$, where $\ker(R)$ is the null space or the kernel of the resolvent $R$.  Therefore, we can write $R(z)=(H-z)^{-1}$.
As a consequence of this result and the property of the resolvent $R(z)^{*}=R(\bar{z})$ from its explicit expression (\ref{resolvent}) and the symmetry property of the heat kernel, it is easy to see that $H$ is self-adjoint ($H^* - \bar{z}=(H-z)^*=(R^{-1}(z))^*=(R(z)^{*})^{-1}=R(\bar{z})^{-1}=H-\bar{z}$, where $*$ denotes the adjoint).
Then, it can be shown that the sequence of $H_{\epsilon}$ operators  converges to $H$ in the strong resolvent sense \cite{Albeverio 2}.

Unfortunately, Fourier transform on a general Riemannian manifold is absent. Nevertheless, using the Corollary 9.5 in \cite{Pazy}, it is possible to prove that there exists a densely defined closed operator $H$ associated with the resolvent (\ref{resolvent}) \cite{caglar}.  For convenience of the reader, we give the statement of that corollary:
Let
$\Lambda$ be an unbounded subset of $\mathbb{C}$. Then, $R(z)$ associated with the above resolvent (\ref{resolvent}) is a
pseudo resolvent on $\Lambda$. Moreover, if there is a sequence $E_k$ such that  $|E_k| \rightarrow \infty$ as $k \rightarrow \infty$ (for instance, choose $E_k  = -k |E_0| \in
\Lambda$, where $E_0$ is chosen to be below the lower bound $E_*$ on the ground state
energy which has been found in \cite{point interactions on manifolds2})  and
$ \lim_{k \rightarrow \infty} - E_k R(E_k) \psi =\psi$
for all $\psi \in \mathcal{H}$, then $R(z)$ is the resolvent of a
unique densely defined closed operator $H$. Then, self-adjointness of $H$ follows immediately from symmetry property as we have shown in the above paragraph. 
Hence, the sequence of  the operator $H_\epsilon$ converges to the operator $H$ in the strong resolvent sense{\color{red},} and this completes the proof. Actually, the existence of the self-adjoint Hamiltonian operator can also be proved using Trotter-Kato theorem \cite{Reed Simon V1} and this is discussed for this model and its many-body version on flat spaces in \cite{dimock}.

\section*{Appendix B: A Proof of the Proposition 1} \label{proof of prop1}

\setcounter{equation}{0}

\renewcommand{\theequation}{B-\arabic{equation}}

It is well known \cite{chavel} that the spectrum of the Laplacian on compact connected Riemannian manifolds without boundary only consists of the point part, i.e., $\sigma(-\triangle_g)=\{0=\sigma_0 \leq \sigma_1 \leq \ldots \}$, with $\sigma_l$ tending to infinity as $l \rightarrow \infty$ and each eigenvalue has finite multiplicity.  In order to show that the Hamiltonian $H$ is a compact perturbation of the free Hamiltonian, we first note that if $(H-z)^{-1} -(H_0 -z)^{-1}$ is compact for some $z \in \rho(H) \cap \rho(H_0)$, where $\rho$ denotes for the resolvent set, it holds for all $z\in \rho(H) \cap \rho(H_0)$ by  Lemma 4 in chapter XIII.4 \cite{Reed Simon V4}. Hence it suffices to prove it for a particular $z$. For that reason, let us choose $z=-E+ i \epsilon$, where $E$ is real and sufficiently large  positive and $\epsilon >0$. Then, compute $\Tr[(H-z)^{-1} -(H_0 -z)^{-1}]$ for any orthonormal basis $\{\phi_n\}$ in $L^2(M)$
\beqs \begin{split}
 \sum_n  \sum_{i,j=1}^{N} \left(\int_{M} \overline{\phi_{n}(x)} R_0(x,a_i|-E+i \epsilon) \; d_{g}^{D}x
\right) & [\Phi^{-1}(-E+i \epsilon)]_{ij} \\[2ex] & \times \, \left(\int_{M} \phi_{n}(y) R_0(y,a_j|-E+i \epsilon) \; d_{g}^{D}y
\right) \;. \end{split} 
\eeqs
Using the Cauchy-Schwarz inequality and the fact that $R_0(x,a_i|z)=\int_{0}^{\infty} \! K_{t_1}(x,a_i) \, e^{z t_1} \; d t_1$ (similarly for $R_0(y,a_j|z)$) for $\Re(z)<0$,  the above term is less than or equal to
\beqs \begin{split} \label{compactness of second term}
 N^2 \; \max_{1 \leq i,j \leq N} \bigg[ \left(\int_{0}^{\infty} \! K_{u_1}(a_i,a_i) \,e^{-u_1 E}\, {\sin u_1 \epsilon \over \epsilon} \; du_1 
\right)^{1/2} & | [\Phi^{-1}(-E+i \epsilon)]_{ij} |  \\[2ex] &  \times \, \left(\int_{0}^{\infty} \! K_{u_2}(a_j,a_j) \,e^{-u_2 E}\, {\sin u_2 \epsilon \over \epsilon} \; du_2
\right)^{1/2} \bigg] \;,
\end{split}
\eeqs
where we have used the semi-group property of the heat kernels and made change of variables $t_1+t_2=u$, $t_1-t_2=v$. One can then easily show that the above integrals are finite due to the upper bound of the heat kernel (\ref{heatkernel bound2}). 
Let us now recall the following fact (Corollary 5.6.13 in \cite{Horn}): Let $A$ be $N \times N$ matrix, and let $\eta >0$ be given. Then, there is a constant $C=C(A,\eta)$ such that $|(A^k)_{ij}| \leq C(\rho(A)+\eta)^k$ for all $k=1,2,\ldots$ and all $i,j=1,2,\ldots, N$, where $\rho(A)$ is the spectral radius of the matrix $A$. Let $A=\Phi^{-1}$ and $k=1$, then $|[\Phi^{-1}(-E+i \epsilon)]_{ij}| \leq C (\rho(\Phi^{-1}) +\eta) \leq C(||\Phi^{-1}(-E+i\epsilon)|| +\eta)$. Let $\Phi=\mathbb{D}-\mathbb{K}$, where $\mathbb{D}$ is the diagonal part of the matrix $\Phi$. Then, $\Phi^{-1}= (1-\mathbb{D}^{-1}\mathbb{K})^{-1} \; \mathbb{D}^{-1}$. Therefore, the principal matrix $\Phi$ is invertible if and only if $(1-\mathbb{D}^{-1}\mathbb{K})$ has an inverse. The matrix $(1-\mathbb{D}^{-1}\mathbb{K})$ is invertible if $||\mathbb{D}^{-1}\mathbb{K}||<1$. Then, the inverse of $\Phi$ can be written as a geometric series $\Phi^{-1}=(1+(\mathbb{D}^{-1}\mathbb{K})+(\mathbb{D}^{-1}\mathbb{K})^2 + \ldots)\; \mathbb{D}^{-1}$  from which we get $||\Phi^{-1}||\leq {1 \over 1-||\mathbb{D}^{-1}\mathbb{K}||} \; ||\mathbb{D}^{-1}||$. If we choose $E$ sufficiently large that $||\mathbb{D}^{-1}\mathbb{K}||=1/2$ we find $||\Phi^{-1}(-E+i\epsilon)|| \leq 2 ||\mathbb{D}^{-1}(-E+i \epsilon)||$, which is bounded from above by Lemma \ref{CheegerYau}. Hence, we show that the operator $(H-z)^{-1} -(H_0 -z)^{-1}$ is trace class so it is compact for sufficiently large values of $E$ (hence for all $z \in \rho(H) \cap \rho(H_0)$). Since there are points of $\rho(H_0)$ in both upper and lower half-planes,
$\sigma_{ess}(H)=\sigma_{ess}(H_0)=\varnothing$ due to the Weyl's essential spectrum theorem \cite{Reed Simon V4}.

\section*{Appendix C: A Proof of the Theorem 2} \label{proof of spectrumtheorem}

\setcounter{equation}{0}

\renewcommand{\theequation}{C-\arabic{equation}}

Since the point spectrum of the self-adjoint operator $H$ is given by the set of
real numbers such that the resolvent of that operator does not exist, the resolvent formula (\ref{resolvent}) shows that its negative poles can only occur when the principal matrix is noninvertible, i.e., the solution to the characteristic equation $\det \Phi(\nu) =0$ contributes to the negative part of the point
spectrum of $H$, whereas the free resolvent has positive real poles. Since the positive part of the point spectrum is due to the free Hamiltonian, we interpret the negative part of the point spectrum as bound states.

Let $z=-\nu_{k}^2$ be one of the negative isolated poles of the resolvent. Then, the orthogonal projection onto the subspace
$\ker(H+\nu_{k}^2)$ is given by the Riesz integral representation for $H$ 
\cite{Reed Simon V4}:
\beqs \mathbb{P}_{k} =
- {1\over 2\pi i} \oint_{\Gamma_k}
\mathrm{d} z \;  R(z) \;, \eeqs
where $\Gamma_k$ is an admissible contour enclosing only the isolated
pole $-\nu_k^2$. 
Since the principal matrix is symmetric (self-adjoint) on the real axis, we can write the spectral decomposition of it. Moreover, we can use the fact that $\Phi$ is holomorphic so that there exists holomorphic family of
projection operators on the complex plane \cite{Kato}. Hence, the spectral resolution of
the inverse principal matrix exists and given by
$\Phi^{-1}_{ij}(z)=\sum_{n} {1 \over \omega_n(z)} \mathbb{P}_n(z)_{ij}$,
where $\mathbb{P}_n(z)_{ij}=\bar{A}_i^{n}(z) A_j^n(z)$, and $A^n_i(z)$ is the
normalized eigenvector corresponding to the eigenvalue
$\omega_n(z)$. 
Above contour integral can be calculated from residue theorem and Feynman-Hellmann theorem \cite{Feynman Hellmann1, Feynman Hellmann2} (actually this theorem is also stated without referring Feynman-Hellmann in \cite{Kato}), from which  we can find the wave function (\ref{wavefunction heat delta}) associated with the pole $-\nu_{k}^2$. After a tedious but straightforward computation, we find that the eigenvalues flow according to ${d \omega_n \over d \nu} >0$ 
as a consequence of Feynman-Hellmann theorem and positivity of the heat kernel (see the details in \cite{point interactions on manifolds2}).
Since bound states are obtained from the zeros of the
eigenvalues of the principal matrix, namely,
$\omega_n(-\nu_k^2)=0$,  there is a unique solution for each
$\omega_n(-\nu^2)$ due to its  monotonic behavior. Hence, each eigenvalue $\omega_n$ has at most one zero in $(0,\infty)$.
This implies that there can be at most $N$ zeroes of the eigenvalues, say  $\nu_1, \ldots , \nu_N$, i.e., there can be at most $N$ negative eigenvalues of $H$.

Let $E=\nu_{k}^2$ be an eigenvalue of $H$. Suppose that this eigenvalue does not coincide with the poles of the free resolvent. Then, from the explicit expression of (\ref{wavefunction heat delta}), the wave function associated with this positive isolated pole can not be in $L^2(M)$ unless it is identically zero. This proves the absence of nonnegative eigenvalues coming from the principal matrix $\Phi$.

Nondegeneracy of the ground state and the positivity of its eigenfunction follows from the Perron-Frobenius theorem
\cite{point interactions on manifolds2}.

\section*{Appendix D: A Proof of Proposition \ref{numberof bound states flatspace}}
 \label{proof theorem numberofbs}

 \setcounter{equation}{0}

\renewcommand{\theequation}{D-\arabic{equation}}

Let us first prove the two dimensional case. Using the well-known explicit expression of the heat kernel in $\mathbb{R}^2$, the principal matrix $\Phi$ restricted to the negative real axis $z=-\nu^2$ is 
\beqs\Phi_{ij}(-\nu^2) =  {1 \over 2 \pi} \; \log(\nu/\mu_i) \delta_{ij} -
(1-\delta_{ij})  {1 \over 2 \pi} \; K_0\left( \nu d_{ij} \right) \;,
\label{renormalized Phi delta 2}
\eeqs
where $K_0$ is the modified Bessel function of the third kind, or Macdonald's function \cite{Lebedev} and $d_{ij}=|a_i-a_j|$.

The characteristic equation for $N=2$ yields
$ \log(\nu/\mu_1) \log(\nu/\mu_2) = K_0^2\left(\nu d \right)$.
Let $x=\nu d$, $\alpha_1={1
\over \mu_1 d}$ and $\alpha_2={1
\over \mu_2 d}$, so that it becomes
\beqs \log(\alpha_1 x) \log(\alpha_2 x) = K_0^2(x) \;. \label{2 delta transcendental 2D dimless} \eeqs
in the dimensionless variables. Although the roots of the above transcendental
equation (\ref{2 delta transcendental 2D dimless}) can not be analytically found, we can at
least determine how many roots (bound states) we have and what sufficient conditions must be met for the maximum number of roots. The left hand side of (\ref{2 delta transcendental 2D dimless})
is a positive decreasing function when $0<x<{1\over \alpha_1}$ and a
positive increasing one when $x>{1 \over \alpha_2}$, whereas it has one zero at $x={1 \over \sqrt{\alpha_1 \alpha_2}}$. Hence, $\log(\alpha_1 x) \log(\alpha_2 x)$ has a local
minimum at $x= {1 \over \sqrt{\alpha_1 \alpha_2}}$.
No matter how $\alpha_1$ and $\alpha_2$ are chosen, we expect that there
is at least one root because the function on the left-had side of (\ref{2 delta transcendental 2D dimless})
eventually intersects the
monotonically positive decreasing function on the right-hand side of it ($K_{0}^{2}(x) \sim {\pi \over 2 x} e^{-2x} \left(1 + O(1/x)
\right) $
as $x\rightarrow \infty$). This tells us that there exists at least one bound state. 
We may have a second root if we
impose the condition that $\log^2(\alpha x)$ is
able to exceed the function $K_0^2(x)$ near $x=0$. Therefore it is
necessary to impose 
(\ref{2 delta transcendental 2D dimless}) for $x<1/\alpha_1$ in order to get a second bound
state. Using the asymptotic expansion of the Bessel function $K_0(x)$ \cite{Lebedev}
\beqs
K_0(x) \sim - \log(x/2) \; \hspace{2cm} \mathrm{as} \; \; x \rightarrow 0 \;,
\eeqs 
and imposing (\ref{2 delta transcendental 2D dimless}) near $x=0$, we obtain the claimed condition $d > {2 \over \sqrt{\mu_1 \mu_2}}$.

The principal matrix restricted to real negative energies in $\mathbb{R}^3$ is given by
\beqs
\Phi_{ij}(-\nu^2)=
\frac{1}{4 \pi }(\nu-\mu_{i}) \delta_{ij}
-(1-\delta_{ij}) \frac{1}{4\pi }\frac{e^{-\nu d_{ij}}}{d_{ij}}
\;. \label{principle matrix for three renormalized}
\eeqs
Then, the characteristic equation for $N=2$ leads to 
\beqs
(\nu-\mu_{1})(\nu-\mu_{2})=\frac{1}{d^{2}}e^{-2d\nu} \;. \label{bound state equation 3D}
\eeqs
Introducing the dimensionless variables $x=\nu d$, $\alpha_1=\mu_1 d$, and $\alpha_2=\mu_2 d$, the above equation simply becomes $(x-\alpha_1) (x-\alpha_2)=e^{-2x}$. We can assume that $\alpha_1 < \alpha_2$ without loss of generality. The left-hand side of this equation is a parabola whose zeroes are $\alpha_1$ and $\alpha_2$. If the exponential function on the right-hand side at $x=0$ is less than the value of the function at $x=0$ on the left-hand side, then there are two roots (bound states) of this equation. This means that if  
we impose $ LHS(0) > RHS(0)$, we arrive at the claimed condition 
$d>{1 \over \sqrt{\mu_1 \mu_2}}$. 

Actually, we can also obtain the above conditions by following the same arguments developed in Section \ref{Main Results} (the infimum of the functions are very easy to find).

%\end{appendices}

\end{document}